\newtheorem{Theorem}{Theorem}
\newtheorem{Lemma}{Lemma}
\newtheorem{Corollary}{Corollary}
\newtheorem{Example}{Example}
\newtheorem{Remark}{Remark}
\newtheorem{Counter Example}{Counter Example}
\newtheorem{Incorrect Result}{Incorrect Result}
\newtheorem{Algorithm}{Algorithm}
\begin{document}
\title{The minimal polynomial of sequence obtained from componentwise linear transformation of linear recurring sequence\thanks{This research is supported in part by the National Natural Science Foundation of
China under the Grants 60872025 and 10990011. }}

\author{
Zhi-Han Gao\thanks{Z.-H. Gao is with the Chern Institute of
Mathematics, Nankai University, Tianjin 300071, P.R. China. E-mail: gaulwy@mail.nankai.edu.cn}
\ and Fang-Wei Fu\thanks{F.-W. Fu is with the Chern Institute of
Mathematics and the Key Laboratory of Pure Mathematics and
Combinatorics, Nankai University, Tianjin 300071, P.R. China. Email:
fwfu@nankai.edu.cn}}

\date{}
\maketitle

\begin{abstract}
Let $S=(s_1,s_2,\ldots,s_m,\ldots)$ be a linear recurring sequence
with terms in $GF(q^n)$ and $T$ be a linear transformation of
$GF(q^n)$ over $GF(q)$. Denote
$T(S)=(T(s_1),T(s_2),\ldots,T(s_m),\ldots)$. In this paper, we first
present counter examples to show the main result in [A.M. Youssef
and G. Gong, On linear complexity of sequences over $GF(2^n)$,
Theoretical Computer Science, 352(2006), 288-292] is not correct
in general since Lemma 3 in that paper is incorrect. Then, we determine the
minimal polynomial of $T(S)$ if the canonical factorization of the
minimal polynomial of $S$ without multiple roots is known and thus
present the solution to the problem which was mainly considered in
the above paper but incorrectly solved. Additionally, as a special
case, we determine the minimal polynomial of $T(S)$ if the minimal
polynomial of $S$ is primitive. Finally, we give an upper bound on
the linear complexity of $T(S)$ when $T$ exhausts all possible
linear transformations of $GF(q^n)$ over $GF(q)$. This bound is
tight in some cases.
\end{abstract}

{\bf Keywords:}\quad Linear recurring sequence, minimal polynomial,
linear complexity, linear transformation, m-sequence.

\baselineskip=18pt

\section{Introduction}

A sequence $S=(s_1,s_2,\ldots,s_m,\ldots)$ with terms in a finite
field $GF(q)$ with $q$ elements is called a linear recurring
sequence over $GF(q)$ with characteristic polynomial
\[f(x)=a_0+a_1x+\cdots+a_mx^m\in GF(q)[x]\] if
\[
a_0s_i+a_1s_{i+1}+\cdots+a_ms_{i+m}=0  \mbox{~for any~} i\geq1.
\]The monic characteristic polynomial of $S$ with least degree is called the minimal polynomial of $S$.
The linear complexity of $S$ is defined as the degree of the minimal
polynomial of $S$. The linear complexity of sequences is one of the
important security measures for stream cipher systems (see
\cite{cdr}, \cite{dxs}, \cite{ru1}, \cite{ru2}). For a general
introduction to the theory of linear feedback shift register
sequences, we refer the reader to \cite[Chapter 8]{ln} and the
references therein.

$S=(s_1,s_2,\ldots,s_m,\ldots)$ is a linear recurring sequence over
$GF(2^n)$. It is well known that $GF(2^n)$ could be viewed as an
$n$-dimensional vector space over $GF(2)$. Let $\{\xi_1,\xi_2,
\ldots, \xi_n\}$ and $\{\eta_1,\eta_2,\ldots,\eta_n\}$ be two bases
of $GF(2^n)$ over $GF(2)$. For the basis $\{\xi_1,\xi_2, \ldots,
\xi_n\}$, each term $s_i$ could be written as
$s_i=s_{i1}\xi_1+s_{i2}\xi_2+\cdots+s_{in}\xi_n$ where $s_{ij}\in
GF(2)$. Let $S'=(s_1',s_2',\ldots,s_m',\ldots)$ where
$s_i'=s_{i1}\eta_1+s_{i2}\eta_2+\cdots+s_{in}\eta_n$. Youssef and
Gong \cite{yg} studied the minimal polynomial of $S'$ if the minimal
polynomial of $S$ without multiple roots is known. Let $T$ be a
linear transformation of $GF(2^n)$ over $GF(2)$ and denote
$T(S)=(T(s_1),T(s_2),\ldots,T(s_m),\ldots)$. It is known that the
effect of changing basis of $GF(2^n)$ over $GF(2)$ is equivalent to
the influence of applying an invertible linear transformation $T$ to
the sequence $S$, that is, there exists an invertible linear
transformation $T$ of $GF(2^n)$ over $GF(2)$ such that $T(S)=S'$. In
this paper, we point out that the main result in \cite{yg} is not
correct in general and consider the corresponding problem discussed
in \cite{yg} for general finite field $GF(q^n)$ and general linear
transformation $T$ which is no longer required to be invertible. In
this case, $S$ is a linear recurring sequence over $GF(q^n)$ and $T$
is a linear transformation of $GF(q^n)$ over $GF(q)$. We determine
the minimal polynomial of $T(S)$ if the canonical factorization of
the minimal polynomial of $S$ without multiple roots is known.
Therefore, we give the correct solution to the problem considered in
\cite{yg}.

Our paper is organized as follows. In Section \ref{pre}, we give
some notations and a number of basic results that will be needed in
this paper. In Section \ref{ce}, we present counter examples to show
that the main result \cite[Theorem 1]{yg} is not correct in general
since \cite[Lemma 3]{yg} is incorrect. In Section \ref{mps}, we
determine the minimal polynomial of $T(S)$ if the canonical
factorization of the minimal polynomial of $S$ without multiple
roots is known and thus present the correct results for the
corresponding problem, which is more general than the
case considered in \cite{yg} but incorrectly solved. In Section
\ref{oms}, we determine the minimal polynomial of $T(S)$ if the minimal
polynomial of $S$ is primitive. In Section
\ref{ublc}, an upper bound on linear complexity of $T(S)$ is given.
This bound is tight in some cases.

\section{Preliminaries}
\label{pre} In this section, we give some definitions needed in this
paper and some lemmas upon which the following sections are
discussed.

Let $GF(q)(x^{-1})$ be the field of formal Laurent series over
$GF(q)$ in $x^{-1}$. This field consists of the elements:
\[
\sum_{i=r}^{\infty}s_ix^{-i}
\] where all $s_i\in GF(q)$ and $r$ is an arbitrary integer. The algebraic operations in $GF(q)(x^{-1})$ are defined in the usual way. The field $GF(q)(x^{-1})$ contains the polynomial ring $GF(q)[x]$ as a subring.
The sequence $S=(s_1,s_2,\ldots,s_m,\ldots)$ over $GF(q)$ could be
viewed as an element of $GF(q)(x^{-1})$:
\[
S(x)=\sum_{i=1}^{\infty}s_ix^{-i}
 \] which is called the generating function associated with $S$.
Then, we have the following lemma.
\begin{Lemma}
\label{Lemma1}{\rm \cite[Lemma 2]{ns}}\quad Let $h(x)\in GF(q)[x]$
be an monic polynomial. Then the sequence
$S=(s_1,s_2\ldots,s_m\ldots)$ of elements of $GF(q)$ is a linear
recurring sequence with minimal polynomial $h(x)$ if and only if
\[
S(x)=\frac{g(x)}{h(x)}
\]
with $g(x)\in GF(q)[x], \deg(g)<\deg(h)$, and $\gcd(g,h)=1$.
\end{Lemma}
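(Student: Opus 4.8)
The plan is to work in the Laurent series field $GF(q)(x^{-1})$ and to translate the defining recurrence of $S$ into a polynomiality statement about the product $h(x)S(x)$. The key preliminary computation is the following: writing $h(x)=a_0+a_1x+\cdots+a_mx^m$ with $a_m=1$ and expanding $h(x)S(x)=\sum_{j=0}^{m}\sum_{i\ge 1}a_js_ix^{j-i}$, the coefficient of $x^{-k}$ for $k\ge 1$ equals $a_0s_k+a_1s_{k+1}+\cdots+a_ms_{k+m}$, which is precisely the left-hand side of the recurrence at index $k$, while for a nonnegative exponent $k'$ the coefficient of $x^{k'}$ equals $\sum_{j=k'+1}^{m}a_js_{j-k'}$ and vanishes automatically as soon as $k'\ge m$. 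Consequently, $h$ is a characteristic polynomial of $S$ if and only if $h(x)S(x)=g(x)$ for some $g(x)\in GF(q)[x]$, in which case the bound $\deg(g)\le m-1<\deg(h)$ comes for free; equivalently $S(x)=g(x)/h(x)$.

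Using this dictionary I would establish the two implications separately. For the forward direction, suppose $S$ is a linear recurring sequence with minimal polynomial $h$. Then $h$ is in particular a characteristic polynomial, so $S(x)=g(x)/h(x)$ with $\deg(g)<\deg(h)$, and it remains to show $\gcd(g,h)=1$. Argue by contradiction: if $d=\gcd(g,h)$ had positive degree, cancelling $d$ gives $S(x)=g_1(x)/h_1(x)$ with $\deg(h_1)<\deg(h)$, hence $h_1(x)S(x)=g_1(x)$ is a polynomial and $h_1$ (or its monic scalar multiple) is a characteristic polynomial of $S$ of degree less than $\deg(h)$, contradicting the minimality of $h$. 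For the converse, suppose $S(x)=g(x)/h(x)$ with $h$ monic, $\deg(g)<\deg(h)$ and $\gcd(g,h)=1$. Then $h(x)S(x)=g(x)\in GF(q)[x]$, so $h$ is a characteristic polynomial and $S$ is a linear recurring sequence. If some monic characteristic polynomial $h'$ of $S$ satisfied $\deg(h')<\deg(h)$, then $h'(x)S(x)=g'(x)$ is a polynomial, and equating the two expressions for $S(x)$ and clearing denominators yields $g(x)h'(x)=g'(x)h(x)$ in $GF(q)[x]$; since $\gcd(g,h)=1$ this forces $h\mid h'$, impossible because $h'$ is monic, hence nonzero, of degree less than $\deg(h)$. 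Therefore $h$ is a monic characteristic polynomial of least degree, i.e. the minimal polynomial of $S$.

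The only genuinely delicate point, and the step I would be most careful about, is the interplay between the formal Laurent series field and its polynomial subring: I need the fact that $GF(q)(x^{-1})$ is a field, so that division by the nonzero polynomial $h(x)$ is legitimate and the expression $S(x)=g(x)/h(x)$ is meaningful, together with the observation that an equality of Laurent series whose two sides both lie in the subring $GF(q)[x]$ is automatically an equality of polynomials, so that the gcd and divisibility arguments in $GF(q)[x]$ apply verbatim. Both facts are standard; once they are in place, what remains is only the index bookkeeping in the product $h(x)S(x)$ and a one-line argument in the Euclidean domain $GF(q)[x]$, and the uniqueness of the minimal polynomial implicit in the statement falls out of the converse argument as well.
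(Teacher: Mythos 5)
The paper does not prove this lemma; it is quoted verbatim from Niederreiter's paper \cite{ns}, so there is no in-paper argument to compare against. Your proof is correct and complete: the coefficient bookkeeping showing that $h$ is a characteristic polynomial of $S$ exactly when $h(x)S(x)$ lies in $GF(q)[x]$ (with degree automatically below $\deg h$) is right, and the two gcd/divisibility arguments in $GF(q)[x]$ correctly handle minimality in both directions; this is the standard proof of the result.
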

Note that it is easy to obtain $g(x)$ if the sequence $S$ and its
minimal polynomial $h(x)$ are known. In the following sections, we
will use $g(x)$ to conduct the computation, particularly in our
main algorithm to be discussed in Section \ref{mps}.

Define a mapping $\sigma$ from $GF(q^n)$ to $GF(q^n)$ as follows:
\[\sigma:\alpha\longrightarrow\alpha^q.\]
It is obvious that $\sigma$ is a field automorphism of $GF(q^n)$.
Then, we can extend $\sigma$ to be a ring automorphism of the
polynomial ring $GF(q^n)[x]$ as follows: For
$f(x)=a_0+a_1x+\cdots+a_mx^m\in GF(q^n)[x]$,
\[\sigma:f(x)\longrightarrow \sigma(f(x))\]where
$\sigma(f(x))=\sigma(a_0)+\sigma(a_1)x+\cdots+\sigma(a_m)x^m$.
Similarly, we can extend $\sigma$ again to be a field automorphism
of the formal Laurent series field $GF(q^n)(x^{-1})$ as follows: For
$S(x)=\sum_{i=r}^{\infty}s_ix^{-i}$,
\[\sigma:S(x)\longrightarrow \sigma(S(x))\]where
$\sigma(S(x))=\sum_{i=r}^{\infty}\sigma(s_i)x^{-i}$. Note that for
any $f(x),g(x)\in GF(q^n)[x]$,
\[
\sigma(f(x)g(x))=\sigma(f(x))\sigma(g(x)).
\] and if $g(x)\neq 0$, we have
\[
\sigma(\frac{f(x)}{g(x)})=\frac{\sigma(f(x))}{\sigma(g(x))}.
\]In this paper, we use only one notation $\sigma$ to represent the
above three mappings and $\sigma^{(k)}$ to represent the $k$th usual
composite of $\sigma$. Note that $\sigma^{(0)}$ is the identity
mapping. We are now able to give some lemmas that will be used to
establish our results in this paper:
\begin{Lemma}
\label{Lemma2}{\rm \cite[Lemma 4]{gf}}\quad Let $f(x)\in
GF(q^n)[x]$. Then $\sigma(f(x))$ is irreducible over $GF(q^n)$
if and only if $f(x)$ is irreducible over $GF(q^n)$.
\end{Lemma}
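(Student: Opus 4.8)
The plan is to use the fact, already recorded in the text, that $\sigma$ is a ring automorphism of $GF(q^n)[x]$: such a map must carry any factorization into a factorization of the same shape. Concretely, I would first isolate the two features of $\sigma$ that drive the argument: it is multiplicative, $\sigma(g(x)h(x))=\sigma(g(x))\sigma(h(x))$, and it preserves degrees, $\deg(\sigma(f(x)))=\deg(f(x))$ for every $f(x)\in GF(q^n)[x]$. The latter holds because $\sigma$ is injective on $GF(q^n)$, hence sends the nonzero leading coefficient of $f(x)$ to a nonzero element. Two consequences I would use: $\sigma$ maps the units of $GF(q^n)[x]$, i.e. the nonzero constants, bijectively onto the units; and $\deg(f)\ge 1$ if and only if $\deg(\sigma(f))\ge 1$, so the degree requirement in the definition of irreducibility is respected on both sides.

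Next I would prove the implication ``$f(x)$ reducible over $GF(q^n)$ $\Rightarrow$ $\sigma(f(x))$ reducible over $GF(q^n)$''. If $\deg(f)\le 0$, then $\deg(\sigma(f))\le 0$ and neither polynomial is irreducible by definition, so assume $\deg(f)\ge 1$ and write $f=gh$ with $1\le\deg(g),\deg(h)<\deg(f)$. Applying $\sigma$ gives $\sigma(f)=\sigma(g)\sigma(h)$, and since $\sigma$ preserves degrees we still have $1\le\deg(\sigma(g)),\deg(\sigma(h))<\deg(\sigma(f))$; hence $\sigma(f)$ is reducible. Taking contrapositives, if $\sigma(f)$ is irreducible then $f$ is irreducible.

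For the reverse implication I would exploit that $\sigma$ is invertible. Since $\alpha^{q^n}=\alpha$ for every $\alpha\in GF(q^n)$, the $n$-fold composite $\sigma^{(n)}$ is the identity on $GF(q^n)$, hence the identity on $GF(q^n)[x]$; therefore $\sigma$ is a bijection with $\sigma^{-1}=\sigma^{(n-1)}$, and $\sigma^{(n-1)}$ is itself a ring automorphism of $GF(q^n)[x]$ of exactly the same kind, i.e. multiplicative and degree-preserving. Applying the implication of the previous paragraph with $\sigma^{(n-1)}$ in place of $\sigma$ and $\sigma(f)$ in place of $f$ shows that if $\sigma(f)$ is reducible then $\sigma^{(n-1)}(\sigma(f))=\sigma^{(n)}(f)=f$ is reducible; equivalently, $f$ irreducible $\Rightarrow$ $\sigma(f)$ irreducible. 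Combining the two implications yields the stated equivalence.

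There is no serious obstacle here; the only points that need care are bookkeeping ones — verifying that $\sigma$ genuinely preserves degrees, so that a \emph{proper} factorization maps to a \emph{proper} factorization and constants stay constants, and observing that $\sigma$ is invertible with inverse again of the form $\sigma^{(k)}$, so that the argument runs symmetrically in both directions. Both facts are immediate from $\sigma$ being the Frobenius automorphism of the finite field $GF(q^n)$.
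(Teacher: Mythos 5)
The paper does not prove this lemma; it is quoted from \cite[Lemma 4]{gf} and used as a black box, so there is no in-paper argument to compare against. Your proof is correct and complete: it is the standard observation that $\sigma$ is a degree-preserving ring automorphism of $GF(q^n)[x]$ whose inverse $\sigma^{(n-1)}$ is an automorphism of the same kind, so proper factorizations correspond exactly under $\sigma$, and you have handled the constant/unit edge cases properly.
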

\begin{Lemma}
\label{Lemma3}{\rm \cite[p.81, Exercise 2.36]{ln}}\quad Note that
$GF(q^n)$ is an $n$-dimensional vector space over $GF(q)$. Then, $T$
is a linear transformation of $GF(q^n)$ over $GF(q)$ if and only if
there uniquely exist $c_0, c_1, \ldots, c_{n-1}\in GF(q^n)$ such
that
\[
T(x)=c_0x+c_1x^q+\cdots+c_{n-1}x^{q^{n-1}}, \;\; x \in GF(q^n).
\]
\end{Lemma}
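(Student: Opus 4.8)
The plan is to prove the two directions separately: the ``if'' part by a direct computation, and the ``only if'' part (together with the uniqueness) by combining an injectivity observation with a dimension count.

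\emph{The easy direction.} Suppose $T(x)=c_0x+c_1x^q+\cdots+c_{n-1}x^{q^{n-1}}$ with $c_i\in GF(q^n)$, and let $p$ be the characteristic of $GF(q)$. Since $GF(q^n)$ has characteristic $p$, the map $x\mapsto x^{q^i}$ is additive for every $i$ (iterating $(u+v)^p=u^p+v^p$), so $T$ is additive. Moreover every $a\in GF(q)$ satisfies $a^{q^i}=a$, hence $T(ax)=\sum_i c_i(ax)^{q^i}=\sum_i c_i a^{q^i}x^{q^i}=a\sum_i c_i x^{q^i}=aT(x)$. Therefore $T$ is a linear transformation of $GF(q^n)$ over $GF(q)$.

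\emph{Uniqueness.} The polynomial $P(x)=\sum_{i=0}^{n-1}c_ix^{q^i}\in GF(q^n)[x]$ has degree at most $q^{n-1}<q^n$. A nonzero polynomial over a field of degree less than $q^n$ cannot vanish at all $q^n$ points of $GF(q^n)$. Hence if $\sum_i c_ix^{q^i}$ and $\sum_i c_i'x^{q^i}$ define the same function on $GF(q^n)$, their difference is a polynomial of degree $<q^n$ vanishing on $GF(q^n)$, so it is the zero polynomial and $c_i=c_i'$ for all $i$. In particular the coefficients in the statement are unique, and the assignment $(c_0,\dots,c_{n-1})\mapsto\bigl(x\mapsto\sum_i c_ix^{q^i}\bigr)$ is injective.

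\emph{The remaining direction, by counting.} Let $\mathcal{L}$ be the set of all linear transformations of $GF(q^n)$ over $GF(q)$, and let $\mathcal{P}\subseteq\mathcal{L}$ be the subset of those induced by linearized polynomials $\sum_{i=0}^{n-1}c_ix^{q^i}$; the inclusion holds by the easy direction, and $\mathcal{P}$ is a $GF(q)$-subspace of $\mathcal{L}$ since it is closed under sums and $GF(q)$-scalar multiples. As $GF(q^n)$ is $n$-dimensional over $GF(q)$, the space $\mathcal{L}$ is isomorphic to the space of $n\times n$ matrices over $GF(q)$, so $|\mathcal{L}|=q^{n^2}$; and by the injectivity above, $|\mathcal{P}|=|GF(q^n)|^{\,n}=q^{n^2}$. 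A subspace of $\mathcal{L}$ of the same finite cardinality as $\mathcal{L}$ must equal $\mathcal{L}$, so $\mathcal{P}=\mathcal{L}$, i.e. every such $T$ has the claimed form. The step needing the most care is the uniqueness/injectivity argument, since the existence half of the proof rests on it (one could instead argue existence by Lagrange interpolation and checking that additivity forces all non-$q$-power terms to vanish, but the counting route is shorter); the rest is routine.
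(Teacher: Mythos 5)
Your proof is correct and complete. Note, however, that the paper does not prove this lemma at all: it is quoted directly from Lidl and Niederreiter (p.~81, Exercise~2.36), so there is nothing in the paper to compare against. Your argument --- additivity of Frobenius plus $a^{q^i}=a$ for $a\in GF(q)$ for the easy direction, the degree bound $q^{n-1}<q^n$ for injectivity/uniqueness, and the cardinality count $|\mathcal{P}|=(q^n)^n=q^{n^2}=|\mathcal{L}|$ to get surjectivity --- is the standard proof of this classical fact about $q$-polynomials, and every step checks out. (An alternative constructive route, for what it is worth, expresses an arbitrary $GF(q)$-linear $T$ via a dual basis as $T(x)=\sum_j T(\beta_j)\,\mathrm{Tr}(\gamma_j x)$ and expands each trace as a linearized polynomial, but your counting argument is shorter.)
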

\begin{Lemma}
\label{Lemma4}{\rm \cite[Theorem 8.57]{ln}}\quad  Let $S_1, S_2,
\ldots, S_k$ be linear recurring sequences over $GF(q)$. The minimal
polynomials of $S_1, S_2, \ldots, S_k$ are $h_1(x), h_2(x), \ldots,
h_k(x)$ respectively. If $h_1(x), h_2(x), \ldots, h_k(x)$ are
pairwise relatively prime, then the minimal polynomial of
$\sum_{i=1}^{k}S_i$ is the product of $h_1(x),h_2(x),\ldots,
h_k(x)$.
\end{Lemma}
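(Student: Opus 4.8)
The plan is to argue entirely with generating functions, invoking Lemma~\ref{Lemma1} to pass back and forth between minimal polynomials and reduced fractions. By Lemma~\ref{Lemma1}, the hypothesis that each $S_i$ is linear recurring with minimal polynomial $h_i(x)$ means $S_i(x)=g_i(x)/h_i(x)$ for some $g_i(x)\in GF(q)[x]$ with $\deg(g_i)<\deg(h_i)$ and $\gcd(g_i,h_i)=1$; note also that each $h_i$ is monic (minimal polynomials are monic by definition), so $h(x):=\prod_{i=1}^{k}h_i(x)$ is monic.

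Next I would put the $S_i(x)$ over the common denominator $h(x)$ and add, which gives
\[
S(x)=\sum_{i=1}^{k}S_i(x)=\frac{g(x)}{h(x)},\qquad\text{where }g(x):=\sum_{i=1}^{k}g_i(x)\prod_{j\neq i}h_j(x).
\]
The degree condition is immediate, since $\deg\!\big(g_i\prod_{j\neq i}h_j\big)<\deg(h_i)+\sum_{j\neq i}\deg(h_j)=\deg(h)$ for every $i$, and hence $\deg(g)<\deg(h)$. The only step requiring thought is the coprimality $\gcd(g,h)=1$. Since the $h_i$ are pairwise relatively prime, it suffices to show $\gcd(g,h_i)=1$ for each fixed $i$; reducing modulo $h_i$, every summand of $g$ with $j\neq i$ vanishes (it contains the factor $h_i$), so $g\equiv g_i\prod_{j\neq i}h_j\pmod{h_i}$, and because $\gcd(g_i,h_i)=1$ and $\gcd(h_j,h_i)=1$ for all $j\neq i$, this residue is a unit modulo $h_i$, i.e.\ $\gcd(g,h_i)=1$. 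Therefore $\gcd(g,h)=1$.

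Having written $S(x)=g(x)/h(x)$ with $h$ monic, $\deg(g)<\deg(h)$, and $\gcd(g,h)=1$, Lemma~\ref{Lemma1} then identifies $h(x)=\prod_{i=1}^{k}h_i(x)$ as the minimal polynomial of $S=\sum_{i=1}^{k}S_i$, which is the claim. The main (and essentially only) obstacle is the coprimality verification, which is a Chinese-remainder-type computation; alternatively one could avoid generating functions by observing that the characteristic polynomials of a sequence form an ideal of $GF(q)[x]$, so that the minimal polynomial of $\sum_i S_i$ divides $\operatorname{lcm}(h_1,\ldots,h_k)=\prod_i h_i$, and then using pairwise coprimality to show each $h_i$ divides that minimal polynomial --- but the generating-function route is shorter and stays within the tools set up in Section~\ref{pre}.
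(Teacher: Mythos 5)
Your proof is correct, but there is nothing in the paper to compare it against: Lemma~\ref{Lemma4} is not proved there, it is imported verbatim as \cite[Theorem 8.57]{ln}, so your argument is a self-contained alternative derivation rather than a reconstruction of the paper's reasoning. That said, it is very much in the spirit of the surrounding text: writing each $S_i(x)=g_i(x)/h_i(x)$ via Lemma~\ref{Lemma1}, summing over the common denominator $h=\prod_i h_i$, and checking $\deg(g)<\deg(h)$ and $\gcd(g,h)=1$ is exactly the style of computation the authors perform in Counter Example~\ref{Counter Example1} and in the proofs of Theorems~\ref{Theorem1} and~\ref{Theorem2}. The one step that genuinely needs the hypothesis is handled correctly: reducing $g=\sum_i g_i\prod_{j\neq i}h_j$ modulo a fixed $h_i$ kills all terms except the $i$th, and the surviving residue $g_i\prod_{j\neq i}h_j$ is a unit mod $h_i$ precisely because $\gcd(g_i,h_i)=1$ (minimality of $h_i$) and $\gcd(h_j,h_i)=1$ for $j\neq i$ (pairwise coprimality); passing from $\gcd(g,h_i)=1$ for all $i$ to $\gcd(g,h)=1$ is then immediate since every irreducible factor of $h$ divides some $h_i$. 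By contrast, the source \cite{ln} proves this by a dimension/structure argument on the spaces of sequences annihilated by a given characteristic polynomial; your route buys a shorter proof that stays entirely within the generating-function toolkit the paper sets up in Section~\ref{pre}, at the cost of relying on Lemma~\ref{Lemma1} (itself quoted without proof). Either way, the lemma stands and your argument is sound.
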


\begin{Lemma}
\label{Lemma5} {\rm \cite[Lemma 2]{gf}}\quad Let $S$ be a linear
recurring sequence over $GF(q)$ with minimal polynomial
$h(x)=h_1(x)h_2(x)\cdots h_k(x)$ where $h_1(x),h_2(x),\ldots,h_k(x)$
are monic polynomials over $GF(q)$. If $h_1(x),h_2(x),\ldots,h_k(x)$
are pairwise relatively prime, then there uniquely exist sequences
$S_1,S_2,\ldots,S_k$ over $GF(q)$ such that
\[
S=S_1+S_2+\cdots+S_k
\] and the minimal polynomials
of $S_1,S_2,\ldots,S_k$ are $h_1(x),h_2(x),\ldots,h_k(x)$
respectively.
\end{Lemma}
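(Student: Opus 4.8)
The plan is to pass to generating functions and reduce everything to a partial fraction decomposition in $GF(q)(x^{-1})$. By Lemma \ref{Lemma1}, since $S$ has minimal polynomial $h(x)=h_1(x)\cdots h_k(x)$, we may write $S(x)=g(x)/h(x)$ with $\deg(g)<\deg(h)$ and $\gcd(g,h)=1$. Because the $h_i(x)$ are pairwise relatively prime, the Chinese Remainder Theorem for $GF(q)[x]$ (equivalently, partial fractions) gives a \emph{unique} expression
\[
\frac{g(x)}{h(x)}=\sum_{i=1}^{k}\frac{g_i(x)}{h_i(x)},\qquad \deg(g_i)<\deg(h_i)\ \text{for each }i.
\]
I would then define $S_i$ to be the sequence over $GF(q)$ whose generating function is $g_i(x)/h_i(x)$; since $\deg(g_i)<\deg(h_i)$ this is indeed a formal power series in $x^{-1}$ with coefficients in $GF(q)$, and comparing coefficients in $S(x)=\sum_i S_i(x)$ shows $S=S_1+\cdots+S_k$ termwise.

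The step requiring a little care is checking that each $g_i(x)/h_i(x)$ is already in lowest terms, which is what guarantees (via Lemma \ref{Lemma1}) that the minimal polynomial of $S_i$ is \emph{exactly} $h_i(x)$ rather than a proper divisor. Clearing denominators gives $g(x)=\sum_{i=1}^{k} g_i(x)\prod_{j\neq i}h_j(x)$. If some irreducible factor $p(x)$ of $h_i(x)$ divided $g_i(x)$, then $p(x)$ would divide every term on the right-hand side — for $j\neq i$ the factor $h_i(x)$ appears in $\prod_{l\neq j}h_l(x)$ — so $p(x)\mid g(x)$, and since $p(x)\mid h_i(x)\mid h(x)$ this would contradict $\gcd(g,h)=1$. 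Hence $\gcd(g_i,h_i)=1$ and Lemma \ref{Lemma1} applies to each $S_i$.

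Finally, for uniqueness, suppose $S=S_1'+\cdots+S_k'$ is another decomposition with the minimal polynomial of $S_i'$ equal to $h_i(x)$. By Lemma \ref{Lemma1}, $S_i'(x)=g_i'(x)/h_i(x)$ with $\deg(g_i')<\deg(h_i)$, so $\sum_{i=1}^{k}\bigl(g_i(x)-g_i'(x)\bigr)/h_i(x)=0$; multiplying through by $h(x)$ and reducing modulo $h_i(x)$ yields $\bigl(g_i(x)-g_i'(x)\bigr)\prod_{j\neq i}h_j(x)\equiv 0\pmod{h_i(x)}$, and since $\prod_{j\neq i}h_j(x)$ is coprime to $h_i(x)$ we get $g_i(x)\equiv g_i'(x)\pmod{h_i(x)}$. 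The degree bound $\deg(g_i-g_i')<\deg(h_i)$ then forces $g_i=g_i'$, hence $S_i=S_i'$ for every $i$. The only real obstacle is bookkeeping: keeping the degree bounds and the coprimality conditions aligned so that Lemma \ref{Lemma1} can be invoked cleanly in both the existence and the uniqueness halves.
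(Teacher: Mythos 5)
Your proof is correct, but it is not the route the paper takes: Lemma \ref{Lemma5} itself is only quoted from an external source, and the closest in-paper argument is the proof of Algorithm \ref{Algorithm1}, which constructs the $S_j$ by applying shift-operator polynomials $u_j(L)\prod_{i\neq j}h_i(L)$ coming from the B\'ezout identity $\sum_j u_j\prod_{i\neq j}h_i=1$, shows each $h_j(L)S_j=0$, and then invokes Lemma \ref{Lemma4} to force the minimal polynomial of $S_j$ to be exactly $h_j$ (a degree-counting step: the product of the candidate divisors must equal $h$). You instead work entirely inside $GF(q)(x^{-1})$ via Lemma \ref{Lemma1} and the partial fraction decomposition, proving $\gcd(g_i,h_i)=1$ by the clearing-denominators argument. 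Both are sound; the trade-off is that the paper's approach is explicitly computational (it tells you how to produce $S_1,\ldots,S_k$ from the terms of $S$, which is what Algorithm \ref{Algorithm2} later needs), while yours delivers the \emph{uniqueness} claim of the lemma essentially for free from the uniqueness of the partial fraction decomposition --- a part of the statement the paper's constructive argument does not address at all. Your coprimality check (an irreducible $p\mid\gcd(g_i,h_i)$ would divide every summand of $g=\sum_i g_i\prod_{j\neq i}h_j$ and hence contradict $\gcd(g,h)=1$) and your uniqueness step (reduce $\sum_i (g_i-g_i')\prod_{j\neq i}h_j=0$ modulo $h_i$ and use the degree bound) are both complete and correct.
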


Below we present a specific method to calculate $S_1,S_2,\ldots,S_k$
from $S$ in order to obtain the minimal polynomial of $T(S)$ in this
paper. For this purpose, we use the left shift operator $L$ to
define characteristic polynomials of linear recurring sequences.
This way to define the characteristic polynomials of linear
recurring sequences is discussed in \cite[Chaper 4]{gg}.
Specifically, define $L$ to be a linear mapping of sequences over
$GF(q)$ as follows:
\[ L:\;
(s_1,s_2,s_3\ldots,s_m,\ldots)\longrightarrow
(s_2,s_3,\ldots,s_{m+1}\ldots).
\]
Let $f(x)=a_0+a_1x+\cdots +a_kx^k\in GF(q)[x]$, then $f(L)$ is a
linear mapping of sequences over $GF(q)$ as follows:
\[
S\longrightarrow f(L)(S)=a_0S+a_1L(S)+\cdots+a_kL^{(k)}(S).
\]
From the definition of $f(L)$, we have $f(L)(S)=0$ if and only if
$f(x)$ is a characteristic polynomial of $S$.

\begin{Algorithm}
\label{Algorithm1} Using the same notations as in Lemma
\ref{Lemma5}, the method to obtain $S_1,S_2,\ldots,S_k$ from $S$ is
as follows:

Step 1. Using the Division Algorithm of polynomials over $GF(q)$, we
obtain the polynomials $u_1(x), u_2(x),\ldots, u_k(x)$ over $GF(q)$
satisfying:
\begin{equation}
\label{gcd}
u_1(x)\prod_{i\neq1}h_i(x)+u_2(x)\prod_{i\neq2}h_i(x)+\cdots+u_k(x)\prod_{i\neq
k}h_i(x)=1.
\end{equation}

Step 2. $S_j=u_j(L)\prod_{i\neq j}h_i(L)(S)$.
\end{Algorithm}
\begin{proof}
Step 1 follows from the fact that $h_1,h_2,\ldots, h_k$ are pairwise
relatively prime. Due to the equation (\ref{gcd}), we have
\[
S=S_1+S_2+\cdots+S_k.
\] Meanwhile, since $h_1(L)h_2(L)\cdots h_k(L)(S)=0$,
we have
\[h_j(L)S_j=h_j(L)u_j(L)\prod_{i\neq j}h_i(L)(S)=0.\]
Thus, $h_j(x)$ is a characteristic polynomial of $S_j$. Assume that
the minimal polynomial of $S_j$ is $h_j^{'}(x)$ which is a monic
divisor of $h_j(x)$ for $1\leq j\leq k$. Then, we have that
$h_1^{'}(x),h_2^{'}(x),\ldots,h_k^{'}(x)$ are pairwise relatively
prime. By Lemma \ref{Lemma4}, the minimal polynomial of $S$
is $\prod_{j=1}^{k}h_j^{'}(x)$. Thus,
\[
h_1^{'}(x)h_2^{'}(x)\cdots h_k^{'}(x)=h_1(x)h_2(x)\cdots h_k(x).
\]Therefore, for $1\leq j\leq k$, we have
\[
h_j^{'}(x)=h_j(x)
\] which completes the proof.
\end{proof}

\section{Counter examples}
\label{ce} In this section, we point out that the main result \cite[Theorem 1]{yg} is not correct in general
since \cite[Lemma 3]{yg} is incorrect. The wrong lemma \cite[Lemma 3]{yg} is as follows:
\begin{Incorrect Result}{\rm \cite[Lemma 3]{yg}}\quad
\label{Incorrect Result1} For $i=1,2,\ldots,k$, let $S_i$ be a
homogeneous linear recurring sequence in $GF(q)$ with minimal
polynomial $h_i(x)$. Then, the minimal polynomial $h(x)$ of the
sequence $S=S_1+S_2+\cdots+S_k, S_i\neq S_j$ for $i\neq j$, is given
by the least common multiple of $h_1,h_2,\ldots, h_k$, $i.e.$,
$h(x)=$\rm{lcm}$[h_1,h_2,\ldots,h_k]$.
\end{Incorrect Result}

Actually, a counter example for this lemma is given in \cite[Example 8.58]{ln}. For the completeness of the paper, we give another similar example:
\begin{Counter Example}
\label{Counter Example1} Let $S_1,S_2$ be two linear recurring
sequences over $GF(2)$ with generating functions
\[S_1(x)=\frac{1}{(x+1)(x^3+x+1)}, \; S_2(x)=\frac{1}{(x+1)(x^2+x+1)}.\]
By Lemma \ref{Lemma1}, we obtain that the minimal polynomials of
$S_1$ and $S_2$ are $(x+1)(x^3+x+1)$ and $(x+1)(x^2+x+1)$, respectively.
Meanwhile, we have
\[
(S_1+S_2)(x)=S_1(x)+S_2(x)=\frac{x^2}{(x^3+x+1)(x^2+x+1)}
\] which implies that the minimal polynomial of $S_1+S_2$ is
$(x^3+x+1)(x^2+x+1)$. However,
$\rm{lcm}$$[(x+1)(x^3+x+1),(x+1)(x^2+x+1)]=(x^3+x+1)(x^2+x+1)(x+1)\neq
(x^3+x+1)(x^2+x+1)$. Therefore, \cite[Lemma 3]{yg} is incorrect.
\end{Counter Example}

Note that the main result \cite[Theorem 1]{yg} is obtained by using
the incorrect \cite[Lemma 3]{yg}. Below we give a counter example to
show that \cite[Theorem 1]{yg} is incorrect in general. In order to
establish our results in this paper, we introduce the following
notations and restate \cite[Theorem 1]{yg} in another way.

For a linear transformation $T$ of $GF(q^n)$ over $GF(q)$, we can
extend $T$ to be a linear transformation of $GF(q^n)(x^{-1})$ over
$GF(q)$ as follows:
\[T: \sum_{i=r}^{\infty}s_ix^{-i}\longrightarrow
\sum_{i=r}^{\infty}T(s_i)x^{-i}.\] Meanwhile, we could also extend
the original $T$ to be a linear transformation of sequences with
terms in $GF(q^n)$ as follows:
\[
T:\quad (s_1,s_2,\ldots,s_m,\ldots)\longrightarrow
(T(s_1),T(s_2),\ldots,T(s_m),\ldots).
\]
In this paper, we use only one notation $T$ to represent the above three
linear transformations. Therefore we have:
\[
T(S(x))=T(S)(x)
\]
for any sequence $S$ over $GF(q^n)$.

Note that the linear transformation $T$ of $GF(q^n)$ over $GF(q)$ is
invertible if and only if $T$ is a linear transformation which
transforms one basis to another. This is the case considered in
\cite{yg}. We are now able to restate \cite[Theorem 1]{yg} in
the following way:
\begin{Incorrect Result}{\rm \cite[Theorem 1]{yg}}\quad
\label{Incorrect Result2} Let $S$ be a linear recurring sequence
over $GF(2^n)$ with minimal polynomial $h(x)$ and $T$ be an
invertible linear transformation of $GF(2^n)$ over $GF(2)$. Assume
by Lemma \ref{Lemma3} that
\[
T(x)=c_0x+c_1x^2+\cdots+c_{n-1}x^{2^{n-1}}\mbox{,~~~}x\in GF(2^n)
\]
where $c_0, c_1, \ldots, c_{n-1} \in GF(2^n)$. Then the minimal polynomial of $T(S)$ is given by
\[
\mbox{\rm{lcm}}[\sigma^{(i_0)}(h(x)),\sigma^{(i_1)}(h(x)),\ldots,\sigma^{(i_r)}(h(x))]
\]where $\{i_0,i_1,\ldots,i_r\}=\{j|c_j\neq 0\}$ and $\sigma$ is
defined in the paragraphs below Lemma \ref{Lemma1}.
\end{Incorrect Result}
\begin{Counter Example}
\label{Counter Example2} Let $T$ be a linear transformation of
$GF(2^4)$ over $GF(2)$ such that:
\[
T(x)=x+x^2+x^{2^2}.
\]Let
$\theta$ be a primitive element in $GF(2^4)$. Let $S$ be a linear
recurring sequence over $GF(2^4)$ given by
\[
S=(\theta^{10},\theta^{5},1,\theta^{10},\theta^{5},1,\theta^{10},\theta^{5},1,\theta^{10},\theta^{5},1,\ldots).
\]
The least period of $S$ is 3 and the minimal polynomial of $S$ is $h(x)=x+\theta^{10}$.
Then, $T$ is invertible and the minimal polynomial of $T(S)$ is not
given by
\begin{equation}\label{pr1}
\mbox{lcm}[h(x),\sigma(h(x)),\sigma^{(2)}(h(x))].
\end{equation}
\end{Counter Example}
\begin{proof}We first show that $T$ is invertible. Note that $T$ is
invertible is equivalent to say that $\mbox{ker}(T)=\{0\}$. Suppose, on
the contrary, that there exists nonzero $b\in GF(2^4)$ such that
$T(b)=b+b^2+b^4=0$. So, $1+b+b^3=0$. Since $x^3+x+1|x^7-1$, then we
have $b^7=1$. Meanwhile, $b^{15}=1$ since $b\in GF(2^4)$. Hence,
$b=1$ since $\mbox{gcd}(7,15)=1$. However, $1$ is not a root of
the polynomial $x+x^2+x^4$. That's a contradiction. So,
$\mbox{ker}(T)=\{0\}$ which implies that $T$ is invertible. Since
$\theta^{15}=1$ and
\[
S=(\theta^{10},\theta^{5},1,\theta^{10},\theta^{5},1,\theta^{10},\theta^{5},1,\theta^{10},\theta^{5},1,\ldots),
\] we have
\[S(x)=\theta^{10}x^{-1}+\theta^{5}x^{-2}+x^{-3}+\theta^{10}x^{-4}+\theta^{5}x^{-5}+\cdots=\frac{\theta^{10}}{x+\theta^{10}}.\] So, the minimal polynomial $h(x)$ of $S$ is $x+\theta^{10}$. Meanwhile, by the
definitions of the extended linear transformation $T$ and the field
automorphism $\sigma$,
$T(S(x))=S(x)+\sigma(S(x))+\sigma^{(2)}(S(x))$. Then,
\begin{eqnarray}
T(S)(x)=T(S(x))&=&\frac{\theta^{10}}{x+\theta^{10}}+\frac{\sigma(\theta^{10})}{\sigma(x+\theta^{10})}+\frac{\sigma^{(2)}(\theta^{10})}{\sigma^{(2)}(x+\theta^{10})}\nonumber\\
&=&\frac{\theta^{10}}{x+\theta^{10}}+\frac{\theta^{5}}{x+\theta^{5}}+\frac{\theta^{10}}{x+\theta^{10}}\nonumber\\
&=&\frac{\theta^5}{x+\theta^{5}}\nonumber.
\end{eqnarray}
So, the minimal polynomial of $T(S)$ is $x+\theta^5$. However,
$$\mbox{lcm}[h(x),\sigma(h(x)),\sigma^{(2)}(h(x))]=\mbox{lcm}[x+\theta^{10},x+\theta^{5},x+\theta^{10}]
=(x+\theta^5)(x+\theta^{10})\neq x+\theta^5.$$
Therefore, \cite[Theorem 1]{yg} is incorrect.
\end{proof}

\section{Minimal polynomials of sequences}
\label{mps} In this section, we consider the more general case in which
the finite field is no longer specified to be $GF(2^n)$ and the
invertibility of the linear transformation $T$ of $GF(q^n)$ over
$GF(q)$ is not required. Recall that the main result \cite[Theorem 1]{yg} is not correct.
We present the correct solution to the corresponding problem studied in
\cite{yg} and determine the minimal polynomial of $T(S)$ if the
canonical factorization of the minimal polynomial of $S$ without
multiple roots is known. By Lemma \ref{Lemma3}, the linear transformation $T$ of $GF(q^n)$ over
$GF(q)$ must be of the form:
\begin{equation}
\label{T}
T(x)=c_0x+c_1x^q+\cdots+c_{n-1}x^{q^{n-1}}\mbox{,~~~}c_i\in GF(q^n).
\end{equation}
Let $k$ be a positive factor of $n$ and $n=kt$.
Then, let $T_{k,j}(x)$ denote the following polynomial:
\begin{equation}
\label{T_{k,j}}
T_{k,j}(x)=c_jx^{q^j}+c_{k+j}x^{q^{k+j}}+\cdots+c_{(t-1)k+j}x^{q^{(t-1)k+j}}\mbox{,~~~$0\leq
j<k$}.
\end{equation}
Throughout the rest of the paper, we will only consider linear
recurring sequences with minimal polynomials that have no multiple
roots, which is the case considered in \cite{yg}.

Recall the definition of $\sigma$ in the paragraphs below Lemma
\ref{Lemma1}. For $f(x)\in GF(q^n)[x]$, we denote $k(f)$ the least
positive integer $l$ such that $\sigma^{(l)}(f(x))=f(x)$. Since
$\sigma^{(n)}(f(x))=f(x)$, we have $k(f)$ always exists.

\begin{Lemma}
\label{Lemma6}{\rm \cite[Lemma 3]{gf}}\quad For any $f(x)\in
GF(q^n)[x]$ and positive integer $l$, $\sigma^{(l)}(f(x))=f(x)$ if
and only if $k(f)|l$.
\end{Lemma}

Note that $k(f)$ divides $n$ for any $f(x)\in GF(q^n)[x]$.

 For any
positive integer $m$, define
\[
(GF(q^n))^{m}=\{(a_1,a_2,\ldots,a_m)\mid a_i\in GF(q^n)
\mbox{~~~for~} 1\leq i \leq m\}.
\]
We define a mapping $\mu$ as follows:
\begin{eqnarray}
\mu: &&\bigcup_{m=1}^{\infty}(GF(q^n))^{m}\longrightarrow \mathbb{Z}\nonumber\\
&&(a_1,a_2,\ldots,a_m)\longrightarrow
 \mu(a_1,a_2,\ldots,a_m)\nonumber
\end{eqnarray}
where \begin{equation}\mu(a_1,a_2,\ldots,a_m)= \left\{
\begin{aligned}
          &0,\quad \mbox{\rm{if}} \  (a_1,a_2,\ldots,a_m)=(0,0,\ldots,0), \\
                  &1,\quad \mbox{\rm{if}} \
                  (a_1,a_2,\ldots,a_m)\neq(0,0,\ldots,0).\nonumber
                          \end{aligned} \right.
\end{equation}
\begin{Theorem}
\label{Theorem1} Let $S$ be a linear recurring sequence over
$GF(q^n)$ with irreducible minimal polynomial $h(x)$. Assume that
$g(x)=b_0+b_1x+\cdots+b_lx^l$ is the polynomial over $GF(q^n)$ such
that $S(x)=g(x)/h(x)$ and $l<\mbox{\rm{deg}}(h(x))$. Let $T$ be a
linear transformation of $GF(q^n)$ over $GF(q)$. Then, the minimal
polynomial of $T(S)$ is given by
\[h(x)^{e_0}(\sigma(h(x)))^{e_1}\cdots(\sigma^{(k(h)-1)}(h(x)))^{e_{k(h)-1}}\]
where
$e_j=\mu(T_{k(h),j}(b_0),T_{k(h),j}(b_1),\ldots,T_{k(h),j}(b_l))$
for $0\leq j<k(h)$, and $T_{k,j}(x)$ is defined by (\ref{T}) and
(\ref{T_{k,j}}).
\end{Theorem}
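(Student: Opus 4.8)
The plan is to compute the generating function $T(S)(x)$ explicitly as a reduced rational function and then invoke Lemma~\ref{Lemma1}. Since $T(s)=\sum_{j=0}^{n-1}c_js^{q^j}$ for $s\in GF(q^n)$ and $\sigma^{(j)}(S(x))=\sum_i\sigma^{(j)}(s_i)x^{-i}$, applying $T$ to $S(x)=\sum_i s_ix^{-i}$ term by term and interchanging the (formal) sums gives
\[
T(S)(x)=T(S(x))=\sum_{j=0}^{n-1}c_j\,\sigma^{(j)}(S(x))=\sum_{j=0}^{n-1}c_j\,\frac{\sigma^{(j)}(g(x))}{\sigma^{(j)}(h(x))},
\]
using $S(x)=g(x)/h(x)$ and the compatibility of $\sigma$ with quotients. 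Writing $n=k(h)t$ and grouping the index $j$ by its residue $r$ modulo $k(h)$, Lemma~\ref{Lemma6} yields $\sigma^{(mk(h)+r)}(h(x))=\sigma^{(r)}(h(x))$ for all $m$, so the $r$-th group has common denominator $\sigma^{(r)}(h(x))$. Expanding the corresponding numerator via $\sigma^{(mk(h)+r)}(b_i)=b_i^{q^{mk(h)+r}}$ and the definition~(\ref{T_{k,j}}) of $T_{k(h),r}$ identifies it with $G_r(x):=\sum_{i=0}^{l}T_{k(h),r}(b_i)x^i$. Hence
\[
T(S)(x)=\sum_{r=0}^{k(h)-1}\frac{G_r(x)}{\sigma^{(r)}(h(x))},
\]
where, by the definition of $\mu$, one has $G_r(x)=0$ precisely when $e_r=0$, while $\deg G_r\le l<\deg h$ in all cases.

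Next I would put this sum over the common denominator $D(x):=\prod_{r=0}^{k(h)-1}\bigl(\sigma^{(r)}(h(x))\bigr)^{e_r}=\prod_{r:\,e_r=1}\sigma^{(r)}(h(x))$, writing $T(S)(x)=N(x)/D(x)$. By Lemmas~\ref{Lemma2} and~\ref{Lemma6}, the polynomials $\sigma^{(0)}(h),\ldots,\sigma^{(k(h)-1)}(h)$ are pairwise distinct monic irreducibles over $GF(q^n)$, hence pairwise relatively prime; a degree count using $\deg G_r<\deg h=\deg\sigma^{(r)}(h)$ then shows $\deg N<\deg D$. It remains to verify $\gcd(N,D)=1$: since $D$ is squarefree with irreducible factors $\sigma^{(r_0)}(h)$ ($e_{r_0}=1$), it suffices to check $\sigma^{(r_0)}(h)\nmid N$ for each such $r_0$. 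Reducing $N$ modulo $\sigma^{(r_0)}(h)$ annihilates every summand except the one coming from $G_{r_0}$, leaving $G_{r_0}(x)\prod_{r\ne r_0,\,e_r=1}\sigma^{(r)}(h(x))$; since $GF(q^n)[x]/(\sigma^{(r_0)}(h))$ is a field and neither factor is divisible by $\sigma^{(r_0)}(h)$, this residue is nonzero. Thus $N/D$ is in lowest terms, and Lemma~\ref{Lemma1} gives that the minimal polynomial of $T(S)$ is exactly $D(x)=h(x)^{e_0}(\sigma(h(x)))^{e_1}\cdots(\sigma^{(k(h)-1)}(h(x)))^{e_{k(h)-1}}$. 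The degenerate case $N=D=1$ (all $e_r=0$) corresponds to $T(S)=0$ and agrees with the convention for the minimal polynomial of the zero sequence.

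The main obstacle I anticipate is twofold: first, the index bookkeeping in the grouping step, i.e.\ matching the partial sums $\sum_m c_{mk(h)+r}\sigma^{(mk(h)+r)}(\cdot)$ against the polynomials $T_{k(h),r}$ from~(\ref{T_{k,j}}); and second, and more conceptually, the coprimality verification of $N$ and $D$, which is precisely what pins down the exact exponents $e_r$ rather than merely producing $D(x)$ as \emph{some} characteristic polynomial of $T(S)$. Everything else reduces to the routine Laurent-series manipulations above.
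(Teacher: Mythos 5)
Your proof is correct, and its first half --- computing $T(S)(x)=\sum_{j}c_j\sigma^{(j)}(g)/\sigma^{(j)}(h)$, grouping the indices by their residue modulo $k(h)$ via Lemma~\ref{Lemma6}, and identifying the $r$-th numerator with $\sum_i T_{k(h),r}(b_i)x^i$ --- is exactly the computation in the paper. Where you diverge is in the concluding step. The paper reads each summand $G_r(x)/\sigma^{(r)}(h(x))$ as the generating function of a sequence $S_r$ whose minimal polynomial is $1$ or $\sigma^{(r)}(h)$ (the latter because the numerator is nonzero of degree less than $\deg h$ and $\sigma^{(r)}(h)$ is irreducible by Lemma~\ref{Lemma2}), and then invokes Lemma~\ref{Lemma4} on the sum $T(S)=\sum_r S_r$ of sequences with pairwise coprime minimal polynomials. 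You instead clear denominators to write $T(S)(x)=N(x)/D(x)$ with $D=\prod_{r:e_r=1}\sigma^{(r)}(h)$, check $\deg N<\deg D$ and $\gcd(N,D)=1$ by reducing $N$ modulo each irreducible factor of the squarefree $D$, and conclude directly from Lemma~\ref{Lemma1}. Both routes are sound; yours is self-contained in that it replaces the citation of Lemma~\ref{Lemma4} by an explicit gcd verification (in effect re-proving the relevant special case of that lemma), at the cost of a little extra bookkeeping, while the paper's version delegates exactly that coprimality argument to the quoted result and is correspondingly shorter.
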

\begin{proof}
Let $t$ be the positive factor of $n$ such that $tk(h)=n$. Then, by (\ref{T}) and
(\ref{T_{k,j}}), we have
\begin{eqnarray}
T(S)(x)&=&T(S(x))\nonumber\\
       &=&c_0S(x)+c_1\sigma(S(x))+c_2\sigma^{(2)}(S(x))+\cdots+c_{n-1}\sigma^{(n-1)}(S(x))\nonumber\\
       &=&c_0\frac{g}{h}+c_1\frac{\sigma(g)}{\sigma(h)}+c_2\frac{\sigma^{(2)}(g)}{\sigma^{(2)}(h)}+\cdots+c_{n-1}\frac{\sigma^{(n-1)}(g)}{\sigma^{(n-1)}(h)}\nonumber\\
       &\stackrel{(a)}{=}&\sum_{j=0}^{k(h)-1}\frac{c_j\sigma^{(j)}(g)+c_{k(h)+j}\sigma^{(k(h)+j)}(g)+\cdots+c_{(t-1)k(h)+j}\sigma^{((t-1)k(h)+j)}(g)}{\sigma^{(j)}(h)}\nonumber\\
       &\stackrel{(b)}{=}&\sum_{j=0}^{k(h)-1}\frac{T_{k(h),j}(b_0)+T_{k(h),j}(b_1)x+\cdots+T_{k(h),j}(b_l)x^{l}}{\sigma^{(j)}(h)}\nonumber
\end{eqnarray}
where $(a)$ follows from the definition of $k(h)$, $(b)$ follows
from the substitution of $g(x)$. For $0\leq j\leq k(h)-1$, let $S_j$
be the linear recurring sequence over $GF(q^n)$ with the generating function:
\[
\frac{T_{k(h),j}(b_0)+T_{k(h),j}(b_1)x+\cdots+T_{k(h),j}(b_l)x^{l}}{\sigma^{(j)}(h)}.
\] By Lemma \ref{Lemma2}, since
$h$ is monic irreducible, we have that
$h,\sigma^{(1)}(h),\ldots,\sigma^{(k(h)-1)}(h)$ are monic
irreducible polynomials with the same degree. Then, the
minimal polynomial of $S_j$ is $1$ if
$T_{k(h),j}(b_0)+T_{k(h),j}(b_1)x+\cdots+T_{k(h),j}(b_l)x^{l}=0$;
otherwise, the minimal polynomial of $S_j$ is $\sigma^{(j)}(h)$. By
the definition of $k(h)$, we know that
$h,\sigma^{(1)}(h),\ldots,\sigma^{(k(h)-1)}(h)$ are distinct. In
addition, by Lemma \ref{Lemma2}, they are all monic irreducible
polynomials. Thus, they are pairwise relatively prime. Meanwhile,
$T(S)=\sum_{j=0}^{k(h)-1}S_j$. Therefore, by Lemma \ref{Lemma4}, the
minimal polynomial of $T(S)$ is given by
\[
h(x)^{e_0}(\sigma(h(x)))^{e_1}\cdots(\sigma^{(k(h)-1)}(h(x)))^{e_{k(h)-1}}
\]
where
$e_j=\mu(T_{k(h),j}(b_0),T_{k(h),j}(b_1),\ldots,T_{k(h),j}(b_l))$.
This completes the proof.
\end{proof}

Below we use Counter Example \ref{Counter Example2} to illustrate Theorem
\ref{Theorem1}:
\begin{Example}
 \label{Example1}
All notations are the same as in Counter Example \ref{Counter Example2}.
The minimal polynomial of $S$ is $h(x)=x+\theta^{10}$ and \[
S(x)=\frac{\theta^{10}}{x+\theta^{10}}.\] Then, $g(x)=\theta^{10}$
and $k(h(x))=2$. Since $T(x)=x+x^2+x^4$, we have $T_{2,0}(x)=x+x^4$ and
$T_{2,1}(x)=x^2$. Therefore, by Theorem \ref{Theorem1}, the minimal
polynomial of $T(S)$ is given by
\[
(x+\theta^{10})^{e_0}(\sigma(x+\theta^{10}))^{e_1}
\] where $e_0=\mu(T_{2,0}(\theta^{10}))=0$ and $e_1=\mu(T_{2,1}(\theta^{10}))=1$. Thus, the minimal polynomial of $T(S)$ is $x+\theta^5$ which is
the same as the correct result in the proof of Counter Example
\ref{Counter Example2}.
\end{Example}
\begin{Corollary}
\label{Corollary1} Let $S$ be a linear recurring sequence over
$GF(q^n)$ with irreducible minimal polynomial $h(x)$. Then, for any
integer table $\{e_0,e_1,\ldots,e_{k(h)-1}\}$ where $e_j$ is $0$ or
$1$, there exists a linear transformation $T$ of $GF(q^n)$ over
$GF(q)$ such that the minimal polynomial of $T(S)$ is
\[
h(x)^{e_0}(\sigma(h(x)))^{e_1}\cdots(\sigma^{(k(h)-1)}(h(x)))^{e_{k(h)-1}}.
\] Furthermore, the maximal linear complexity of $T(S)$, where $T$ exhausts
all possible linear transformations of $GF(q^n)$ over $GF(q)$, is
$k(h)\mbox{\rm{deg}}(h)$.
\end{Corollary}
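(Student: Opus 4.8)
The plan is to derive Corollary~\ref{Corollary1} directly from Theorem~\ref{Theorem1} by showing that the exponent vector $(e_0,e_1,\ldots,e_{k(h)-1})$ can be made to equal any prescribed $0/1$-vector through a suitable choice of the coefficients $c_0,c_1,\ldots,c_{n-1}\in GF(q^n)$ defining $T$ via (\ref{T}). Fix the sequence $S$, so that $g(x)=b_0+b_1x+\cdots+b_lx^l$ is fixed with $g\neq 0$, and write $k=k(h)$, $n=kt$. By Theorem~\ref{Theorem1} the $j$th exponent is $e_j=\mu\bigl(T_{k,j}(b_0),\ldots,T_{k,j}(b_l)\bigr)$, where $T_{k,j}(x)=c_jx^{q^j}+c_{k+j}x^{q^{k+j}}+\cdots+c_{(t-1)k+j}x^{q^{(t-1)k+j}}$. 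The key observation is that the coefficients appearing in $T_{k,j}$, namely $c_j,c_{k+j},\ldots,c_{(t-1)k+j}$, are disjoint for distinct $j\in\{0,1,\ldots,k-1\}$ and together exhaust all of $c_0,\ldots,c_{n-1}$; hence the $k$ polynomials $T_{k,0},\ldots,T_{k,k-1}$ can be chosen independently of one another.

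So the first step is: for each $j$ with prescribed $e_j=0$, simply set $c_j=c_{k+j}=\cdots=c_{(t-1)k+j}=0$, which makes $T_{k,j}\equiv 0$ and forces $e_j=0$. The second, and main, step is to show that for each $j$ with prescribed $e_j=1$ we can choose these coefficients so that $T_{k,j}$ does not annihilate the tuple $(b_0,b_1,\ldots,b_l)$, i.e.\ $T_{k,j}(b_r)\neq 0$ for at least one $r$. Since $g\neq 0$ there is an index $r$ with $b_r\neq 0$; the simplest choice is $c_j=1$ (so $T_{k,j}(x)=x^{q^j}+$ higher terms) with $c_{k+j}=\cdots=c_{(t-1)k+j}=0$ if $t=1$, giving $T_{k,j}(x)=x^{q^j}$ and $T_{k,j}(b_r)=b_r^{q^j}\neq 0$. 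When $t>1$ one still wants to guarantee non-vanishing on the whole fixed tuple; taking $T_{k,j}(x)=x^{q^j}$ (all other coefficients zero) already works since $b_r\mapsto b_r^{q^j}$ is a bijection of $GF(q^n)$, so $T_{k,j}(b_r)=b_r^{q^j}\neq0$. Assembling these independent choices across all $j$ produces a single linear transformation $T$ realizing the prescribed vector, and by Theorem~\ref{Theorem1} the minimal polynomial of $T(S)$ is exactly $h(x)^{e_0}(\sigma(h(x)))^{e_1}\cdots(\sigma^{(k-1)}(h(x)))^{e_{k-1}}$.

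For the final assertion, I would argue that the linear complexity of $T(S)$ equals $\deg\bigl(h^{e_0}\sigma(h)^{e_1}\cdots\sigma^{(k-1)}(h)^{e_{k-1}}\bigr)=\bigl(\sum_j e_j\bigr)\deg(h)$, using that $h,\sigma(h),\ldots,\sigma^{(k-1)}(h)$ all have the same degree (they are images of $h$ under the automorphism $\sigma$, cf.\ Lemma~\ref{Lemma2}). This is maximized when every $e_j=1$, giving $k\deg(h)$; and by the construction in the previous paragraph the all-ones vector is attainable, so the maximum $k(h)\deg(h)$ is actually achieved. The only subtlety worth spelling out is that $T_{k,j}$ being a $q$-polynomial means $T_{k,j}(b_r)$ can in principle vanish even when $b_r\neq0$ if several coefficients are active, so one must exhibit an explicit choice (the monomial $x^{q^j}$) that provably avoids this; this is the one place the argument needs care rather than being a one-line appeal to Theorem~\ref{Theorem1}.
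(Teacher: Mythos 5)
Your proposal is correct and follows essentially the same route as the paper: the paper also takes $T(x)=\sum_{j=0}^{k(h)-1}e_jx^{q^j}$ (which is exactly your coefficient-by-coefficient choice making each $T_{k(h),j}$ either zero or the monomial $x^{q^j}$), applies Theorem~\ref{Theorem1}, and uses the injectivity of the Frobenius power on a nonzero $b_i$ to get $e_j'=e_j$, then maximizes by taking all $e_j=1$. The only cosmetic difference is that the paper separately dismisses the trivial case of the zero sequence, which your argument implicitly covers by assuming $g\neq0$.
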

\begin{proof}
It is trivial if $S$ is a zero sequence. If $S$ is not a zero
sequence, then there exits nonzero polynomial $g(x)$ over $GF(q^n)$
such that $S(x)=g(x)/h(x)$ where $g(x)=b_0+b_1x+\cdots+b_lx^l$ and
$l<\mbox{deg}(h)$. Since $g(x)\neq 0$, there exists $b_i\neq 0$ for
some $0\leq i\leq l$. Let
\[
T(x)=\sum_{j=0}^{k(h)-1}e_{j}x^{q^j}.
\]
By Theorem \ref{Theorem1}, we have the minimal polynomial of $T(S)$
is:
\[
h(x)^{e_0^{'}}(\sigma(h(x)))^{e_1^{'}}\cdots(\sigma^{(k(h)-1)}(h(x)))^{e_{k(h)-1}^{'}}
\]
where $e_j^{'}=\mu(e_jb_0^{q^j},e_jb_1^{q^j},\ldots,e_jb_l^{q^j})$.
Then, if $e_j=0$, it is obvious that $e_j^{'}=0$; if $e_j=1$, we
have $e_jb_i^{q^j}\neq 0$ which implies $e_j^{'}=1$. Thus,
$e_j^{'}=e_j$. Therefore, the minimal polynomial of $T(S)$ is:
\[
h(x)^{e_0}(\sigma(h(x)))^{e_1}\cdots(\sigma^{(k(h)-1)}(h(x)))^{e_{k(h)-1}}.
\]
In particular, let $e_j=1$ for any $0\leq j\leq k(h)-1$, then there
exists linear transformation $T$ of $GF(q^n)$ over $GF(q)$ such that
the minimal polynomial of $T(S)$ is
\[
h(x)\sigma(h(x))\cdots\sigma^{(k(h)-1)}(h(x))
\]
with degree $k(h)\mbox{deg}(h)$. By Theorem \ref{Theorem1}, the
linear complexity of $T(S)$ is at most $k(h)\mbox{deg}(h)$.
Therefore, such a linear transformation $T$ achieves the maximal
possible linear complexity of $T(S)$.
\end{proof}

By Theorem \ref{Theorem1} and Corollary \ref{Corollary1}, we  have
the following result:
\begin{Corollary}
\label{Corollary2}Let $S$ be a linear recurring sequence over
$GF(q^n)$ with irreducible minimal polynomial $h(x)$. Then, the set
of minimal polynomials of $T(S)$, where $T$ exhausts all possible
linear transformations of $GF(q^n)$ over $GF(q)$, is given by
\[
\{h(x)^{e_0}(\sigma(h(x)))^{e_1}\cdots(\sigma^{(k(h)-1)}(h(x)))^{e_{k(h)-1}}|
e_j=0,1\mbox{~for~} j=0,1,\ldots,k(h)-1\}.
\]
\end{Corollary}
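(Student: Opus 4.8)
The plan is to prove the stated set equality by establishing the two inclusions separately, each of which is almost immediate from the two results already in hand.

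For the inclusion of the set of minimal polynomials of $T(S)$ into the displayed set, I would argue as follows. Since $h(x)$ is irreducible it has positive degree, so $S$ is not the zero sequence; hence by Lemma~\ref{Lemma1} there is a nonzero polynomial $g(x)=b_0+b_1x+\cdots+b_lx^l$ over $GF(q^n)$ with $S(x)=g(x)/h(x)$ and $l<\deg(h)$. Applying Theorem~\ref{Theorem1} to this $S$ and to an arbitrary linear transformation $T$ of $GF(q^n)$ over $GF(q)$, the minimal polynomial of $T(S)$ equals $h(x)^{e_0}(\sigma(h(x)))^{e_1}\cdots(\sigma^{(k(h)-1)}(h(x)))^{e_{k(h)-1}}$ with $e_j=\mu(T_{k(h),j}(b_0),\ldots,T_{k(h),j}(b_l))$. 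By the definition of $\mu$ each $e_j$ is $0$ or $1$, so this polynomial lies in the displayed set, proving one inclusion.

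For the reverse inclusion I would invoke Corollary~\ref{Corollary1} directly: it asserts that for every choice of $e_0,\ldots,e_{k(h)-1}\in\{0,1\}$ there is a linear transformation $T$ of $GF(q^n)$ over $GF(q)$ (one may take $T(x)=\sum_{j=0}^{k(h)-1}e_jx^{q^j}$) realizing the minimal polynomial $h(x)^{e_0}(\sigma(h(x)))^{e_1}\cdots(\sigma^{(k(h)-1)}(h(x)))^{e_{k(h)-1}}$ for $T(S)$. Hence every element of the displayed set occurs as a minimal polynomial of $T(S)$ for a suitable $T$. Combining the two inclusions yields the claimed identity.

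Since both ingredients are already proved, there is essentially no genuine obstacle here; the only points needing care are noting that the hypothesis of an \emph{irreducible} minimal polynomial indeed rules out the degenerate zero-sequence case, so that a nonzero numerator $g(x)$ is available when invoking Theorem~\ref{Theorem1}, and checking that the family produced by Corollary~\ref{Corollary1} is indexed in exactly the same way (the same conjugates $\sigma^{(j)}(h)$ in the same order, with $j$ ranging over $0,1,\ldots,k(h)-1$) as the one appearing in Theorem~\ref{Theorem1}, so that the two inclusions are about the same set.
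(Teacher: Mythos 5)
Your argument is correct and matches the paper's, which derives this corollary directly by combining Theorem~\ref{Theorem1} (every minimal polynomial of $T(S)$ has the stated form with each $e_j\in\{0,1\}$) with Corollary~\ref{Corollary1} (every such form is realized by some $T$). No issues.
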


Now, we consider the more general situations. At first, some definitions
are needed. We define a equivalence relation
$\stackrel{\sigma}{\sim}$ on $GF(q^n)[x]$:
$f(x)\stackrel{\sigma}{\sim} g(x)$ if and only if there exists
positive integer $j$ such that $\sigma^{(j)}(f(x))=g(x)$. The
equivalence classes induced by this equivalence relation
$\stackrel{\sigma}{\sim}$ are called $\sigma-equivalence$ classes.
These definitions are introduced in \cite[Section 3]{gf}.
\begin{Theorem}
\label{Theorem2} Let $S$ be a linear recurring sequence over
$GF(q^n)$ with minimal polynomial $h(x)$ which is a product of some
distinct monic irreducible polynomials in one $\sigma-equivalence$
class. Assume that $h=h_1\sigma^{(i_1)}(h_1)\cdots
\sigma^{(i_{w-1})}(h_1)$ where $h_1(x)$ is a monic irreducible
polynomial with $\mbox{\rm{deg}}(h_1)=l$ and $i_1,i_2\ldots
i_{w-1}$ are distinct positive integers less than $k(h_1)$.
Assume by Lemma \ref{Lemma5} that $S=\sum_{j=0}^{w-1}S_j$ where
$S_j$ is the linear recurring sequence with the minimal polynomial
$\sigma^{(i_j)}(h_1)$ (here $i_0=0$). Let
$S_j(x)=g_{j}(x)/\sigma^{(i_{j})}(h_1(x))$ where
$g_{j}(x)=b_{j,0}+b_{j,1}x+\cdots+b_{j,l_j}x^{l_j}$ and $l_j<l$. Let
$T$ be a linear transformation of $GF(q^n)$ over $GF(q)$. Then, the
minimal polynomial of $T(S)$ is given by
\[h_1(x)^{e_0}(\sigma(h_1(x))^{e_1})\cdots(\sigma^{(k(h_1)-1)}(h_1(x)))^{e_{k(h_1)-1}}\]
where $e_u$, $0\leq u< k(h_1)$, is given by
\[
e_u=\mu(\sum_{j=0}^{w-1}T_{k(h_1),[u-i_j]}(b_{j,0}),\sum_{j=0}^{w-1}T_{k(h_1),[u-i_j]}(b_{j,1}),\ldots,\sum_{j=0}^{w-1}T_{k(h_1),[u-i_j]}(b_{j,l-1}))
\]
where $b_{j,k}=0$ for $l_j<k\leq l-1$, and $[i]=i$ if $0\leq
i<k(h_1)$; $[i]=k(h_1)+i$ if $-k(h_1)< i<0$, and $T_{k,j}(x)$ is
defined by (\ref{T}) and (\ref{T_{k,j}}).
\end{Theorem}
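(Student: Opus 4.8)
The plan is to follow the proof of Theorem~\ref{Theorem1} almost verbatim, but applied summand by summand to the decomposition $S=\sum_{j=0}^{w-1}S_j$ supplied by Lemma~\ref{Lemma5}, and then to regroup the resulting double sum. First I would write $T(S)(x)=T(S(x))=\sum_{j=0}^{w-1}T(S_j(x))$ and, for each fixed $j$, expand exactly as in Theorem~\ref{Theorem1}:
\[
T(S_j(x))=\sum_{a=0}^{n-1}c_a\,\frac{\sigma^{(a)}(g_j)}{\sigma^{(a)}\bigl(\sigma^{(i_j)}(h_1)\bigr)}=\sum_{a=0}^{n-1}c_a\,\frac{\sigma^{(a)}(g_j)}{\sigma^{(a+i_j)}(h_1)}.
\]
The one new ingredient compared with Theorem~\ref{Theorem1} is that the denominator is now $\sigma^{(a+i_j)}(h_1)$, which by Lemma~\ref{Lemma6} equals $\sigma^{(v)}(h_1)$ for $v\equiv a+i_j\pmod{k(h_1)}$. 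Grouping the $n=t\,k(h_1)$ values of $a$ according to this residue $v\in\{0,1,\ldots,k(h_1)-1\}$, the values of $a$ with $a+i_j\equiv v$ are precisely $a=[v-i_j],[v-i_j]+k(h_1),\ldots,[v-i_j]+(t-1)k(h_1)$, where $[\,\cdot\,]$ is the bracket notation of the statement; substituting $g_j=\sum_r b_{j,r}x^r$ and collecting powers of $x$ (just as in step $(b)$ of the proof of Theorem~\ref{Theorem1}) yields
\[
T(S_j(x))=\sum_{v=0}^{k(h_1)-1}\frac{\sum_{r=0}^{l-1}T_{k(h_1),[v-i_j]}(b_{j,r})\,x^r}{\sigma^{(v)}(h_1(x))},
\]
where we set $b_{j,r}=0$ for $l_j<r\le l-1$ (legitimate since $T_{k,s}(0)=0$).

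Summing over $j$ and interchanging the two finite sums gives $T(S)(x)=\sum_{v=0}^{k(h_1)-1}P_v(x)/\sigma^{(v)}(h_1(x))$, where $P_v(x)=\sum_{r=0}^{l-1}\bigl(\sum_{j=0}^{w-1}T_{k(h_1),[v-i_j]}(b_{j,r})\bigr)x^r$ is a polynomial of degree at most $l-1<l=\deg\sigma^{(v)}(h_1)$. Let $S_v'$ be the linear recurring sequence with generating function $P_v(x)/\sigma^{(v)}(h_1(x))$. By Lemma~\ref{Lemma2} each $\sigma^{(v)}(h_1)$ is monic irreducible of degree $l$, so if $P_v\neq 0$ then $\gcd(P_v,\sigma^{(v)}(h_1))=1$ and Lemma~\ref{Lemma1} forces the minimal polynomial of $S_v'$ to be $\sigma^{(v)}(h_1)$, while if $P_v=0$ then $S_v'=0$ has minimal polynomial $1$. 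In both cases the minimal polynomial of $S_v'$ is $(\sigma^{(v)}(h_1))^{e_v}$, with $e_v=\mu$ of the coefficient tuple of $P_v$, which is exactly the $e_u$ in the statement.

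Finally, by the definition of $k(h_1)$ the polynomials $h_1,\sigma(h_1),\ldots,\sigma^{(k(h_1)-1)}(h_1)$ are pairwise distinct, and by Lemma~\ref{Lemma2} each is monic irreducible, hence they are pairwise relatively prime; so are the nontrivial members of the list $(\sigma^{(v)}(h_1))^{e_v}$. Since $T(S)=\sum_{v=0}^{k(h_1)-1}S_v'$, Lemma~\ref{Lemma4} gives that the minimal polynomial of $T(S)$ is $\prod_{v=0}^{k(h_1)-1}(\sigma^{(v)}(h_1))^{e_v}$, as claimed. I expect the only genuine obstacle to be the index bookkeeping in the regrouping step --- verifying that $a+i_j\equiv v\pmod{k(h_1)}$ corresponds to $a\equiv[v-i_j]$ and that exactly $t$ such $a$ lie in $\{0,\dots,n-1\}$ --- together with confirming $\deg P_v<l$ so that Lemma~\ref{Lemma1} applies; everything else is a direct transcription of the argument for Theorem~\ref{Theorem1}.
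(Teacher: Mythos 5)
Your proposal is correct and follows essentially the same route as the paper's own proof: decompose $T(S)$ summand by summand, regroup the $n$ terms by the residue of $a+i_j$ modulo $k(h_1)$ using the bracket notation, interchange the finite sums, and conclude with Lemma~\ref{Lemma4} exactly as in Theorem~\ref{Theorem1}. The index bookkeeping you flag as the only delicate point is handled the same way in the paper, and your explicit check that $\deg P_v<l$ justifies the appeal to Lemma~\ref{Lemma1} is a (correct) elaboration of what the paper compresses into ``the same argument as in the proof of Theorem~\ref{Theorem1}.''
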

\begin{proof}
By (\ref{T}) and (\ref{T_{k,j}}) and noting that $b_{j,k}=0$ for $l_j<k\leq l-1$, we have
\begin{eqnarray}
       & &T(S)(x)\nonumber\\
       &=&\sum_{j=0}^{w-1}T(S_j(x))\nonumber\\
       &=&\sum_{j=0}^{w-1}c_0S_j(x)+c_1\sigma(S_j(x))+\cdots+c_{n-1}\sigma^{(n-1)}(S_j(x))\nonumber\\
       &=&\sum_{j=0}^{w-1}c_0\frac{g_j}{\sigma^{(i_j)}(h_1)}+c_1\frac{\sigma(g_j)}{\sigma^{(i_j+1)}(h_1)}+\cdots+c_{n-1}\frac{\sigma^{(n-1)}(g_j)}{\sigma^{(i_j+n-1)}(h_1)}\nonumber\\
       &=&\sum_{j=0}^{w-1}\sum_{u=0}^{k(h_1)-1}\frac{T_{k(h_1),u}(b_{j,0})+T_{k(h_1),u}(b_{j,1})x+\cdots+T_{k(h_1),u}(b_{j,l-1})x^{l-1}}{\sigma^{(i_j+u)}(h_1)}\nonumber\\
       &=&\sum_{j=0}^{w-1}\sum_{u=0}^{k(h_1)-1}\frac{T_{k(h_1),[u-i_j]}(b_{j,0})+T_{k(h_1),[u-i_j]}(b_{j,1})x+\cdots+T_{k(h_1),[u-i_j]}(b_{j,l-1})x^{l-1}}{\sigma^{(u)}(h_1)}\nonumber\\
       &=&\sum_{u=0}^{k(h_1)-1}\frac{\sum_{j=0}^{w-1}T_{k(h_1),[u-i_j]}(b_{j,0})+\sum_{j=0}^{w-1}T_{k(h_1),[u-i_j]}(b_{j,1})x+\cdots+\sum_{j=0}^{w-1}T_{k(h_1),[u-i_j]}(b_{j,l-1})x^{l-1}}{\sigma^{(u)}(h_1)}.\nonumber
\end{eqnarray}
Using the same argument in the proof of Theorem \ref{Theorem1}, we
have the minimal polynomial of $T(S)$ is:
\[h_1(x)^{e_0}(\sigma(h_1(x)))^{e_1}\cdots(\sigma^{(k(h_1)-1)}(h_1(x)))^{e_{k(h_1)-1}}\]
where for $0\leq u< k(h_1)$
\[
e_u=\mu(\sum_{j=0}^{w-1}T_{k(h_1),[u-i_j]}(b_{j,0}),\sum_{j=0}^{w-1}T_{k(h_1),[u-i_j]}(b_{j,1}),\ldots,\sum_{j=0}^{w-1}T_{k(h_1),[u-i_j]}(b_{j,l-1}))
\]
which completes the proof.
\end{proof}
\begin{Remark}
Note that $S_0, S_1,\ldots, S_{w-1}$ in the above theorem could be
obtained by Algorithm \ref{Algorithm1}.
\end{Remark}
\begin{Theorem}
\label{Theorem3} Let $S$ be a linear recurring sequence over
$GF(q^n)$ with minimal polynomial $h(x)$ which has no multiple
roots. Assume that the canonical factorization of $h(x)$ in
$GF(q^n)[x]$ is given by
\[
h(x)=\prod_{j=1}^{l}P_{j0}P_{j1}\cdots P_{ji_j}
\]
where $P_{ji}$ are distinct monic irreducible polynomials,
$P_{j0},P_{j1},\ldots, P_{ji_j}$ are in the same
$\sigma-equivalence$ class and $P_{tu}$, $P_{wv}$ are in different
$\sigma-equivalence$ classes when $t\neq w$. Assume by Lemma
\ref{Lemma5} that $S$ is written as
\[S=S_1+S_2+\cdots+S_l\]
where $S_1,S_2,\ldots,S_l$ are linear recurring sequences over
$GF(q^n)$ with minimal polynomial $P_{j0}P_{j1}\cdots P_{ji_j}$,
$j=1,2,\ldots,l$, respectively. Let $T$ be a linear transformation
of $GF(q^n)$ over $GF(q)$. Let $h_1(x),h_2(x),\ldots, h_l(x)$ be the
minimal polynomials of $T(S_1),T(S_2),\ldots,T(S_l)$ respectively.
Then, the minimal polynomial of $T(S)$ is given by the product of
$h_1(x),h_2(x),\ldots, h_l(x)$.
\end{Theorem}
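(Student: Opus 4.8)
The plan is to reduce everything to Lemma \ref{Lemma4} by way of Theorem \ref{Theorem2}. First, since $T$ acts additively on sequences over $GF(q^n)$, we have $T(S)=T(S_1)+T(S_2)+\cdots+T(S_l)$, so it suffices to show that the minimal polynomials $h_1(x),h_2(x),\ldots,h_l(x)$ of $T(S_1),T(S_2),\ldots,T(S_l)$ are pairwise relatively prime; Lemma \ref{Lemma4} will then immediately give that the minimal polynomial of $T(S)$ equals $h_1(x)h_2(x)\cdots h_l(x)$. Before doing this, one should check that Lemma \ref{Lemma5} legitimately produces the decomposition $S=S_1+\cdots+S_l$: the factors $P_{j0}P_{j1}\cdots P_{ji_j}$ for $j=1,\ldots,l$ are pairwise relatively prime because, by hypothesis, $P_{tu}$ and $P_{wv}$ lie in different $\sigma$-equivalence classes when $t\neq w$, hence involve no common irreducible. (If some $T(S_j)$ happens to be the zero sequence, its minimal polynomial is $1$, which is relatively prime to everything and contributes trivially to the product, so this degenerate case is harmless.)

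The core step is to control the irreducible factors of each $h_j(x)$. Fix $j$. The sequence $S_j$ has minimal polynomial $P_{j0}P_{j1}\cdots P_{ji_j}$, which is a product of distinct monic irreducibles all lying in a single $\sigma$-equivalence class --- precisely the hypothesis of Theorem \ref{Theorem2}. Applying Theorem \ref{Theorem2} with its generating irreducible taken to be $P_{j0}$, the minimal polynomial $h_j(x)$ of $T(S_j)$ has the explicit form $\prod_{u=0}^{k(P_{j0})-1}\bigl(\sigma^{(u)}(P_{j0})\bigr)^{e_u}$ with each $e_u\in\{0,1\}$. In particular, every irreducible factor of $h_j(x)$ is of the shape $\sigma^{(u)}(P_{j0})$ for some $u$, and by the very definition of $\stackrel{\sigma}{\sim}$, all such $\sigma^{(u)}(P_{j0})$ lie in the same $\sigma$-equivalence class as $P_{j0}$, namely the class containing $P_{j0},P_{j1},\ldots,P_{ji_j}$.

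Finally, I would invoke the partition property of equivalence relations: distinct $\sigma$-equivalence classes are disjoint. For $t\neq w$, the class attached to $h_t(x)$ (the class of the $P_{t\cdot}$) and the class attached to $h_w(x)$ (the class of the $P_{w\cdot}$) are distinct by hypothesis, hence share no polynomial. Since every irreducible factor of $h_t(x)$ lies in the former and every irreducible factor of $h_w(x)$ lies in the latter, $h_t(x)$ and $h_w(x)$ have no common irreducible factor, i.e.\ $\gcd(h_t,h_w)=1$. Thus $h_1(x),\ldots,h_l(x)$ are pairwise relatively prime, and Lemma \ref{Lemma4} concludes that the minimal polynomial of $T(S)=\sum_{j=1}^l T(S_j)$ is $h_1(x)h_2(x)\cdots h_l(x)$.

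The main (and essentially only nontrivial) obstacle is the middle step: one must know that applying $T$ to $S_j$ cannot introduce any irreducible factor outside the $\sigma$-equivalence class of the factors of the minimal polynomial of $S_j$. This is exactly what the explicit formula in Theorem \ref{Theorem2} delivers --- and is the reason that theorem was stated with the concrete product $\prod_u(\sigma^{(u)}(h_1))^{e_u}$ rather than as a mere existence statement --- so once that result is in hand, the remainder of the argument is routine bookkeeping with disjoint equivalence classes and a single application of Lemma \ref{Lemma4}.
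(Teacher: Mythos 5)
Your proof is correct and follows essentially the same route as the paper's: apply Theorem \ref{Theorem2} to pin down the irreducible factors of each $h_j$ inside the $\sigma$-equivalence class of $P_{j0}$, conclude pairwise coprimality from the disjointness of distinct classes, and finish with Lemma \ref{Lemma4}. Your added remarks (justifying the Lemma \ref{Lemma5} decomposition and handling the case $h_j=1$) are sound refinements of the same argument.
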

\begin{proof}By Theorem \ref{Theorem2}, the minimal polynomial $h_j(x)$ of $T(S_j)$
must be of the form
\[
P_{j0}(x)^{e_0}(\sigma(P_{j0}(x)))^{e_1}\cdots(\sigma^{(k(P_{j0})-1)}(P_{j0}(x)))^{e_{k(P_{j0})-1}}
\]
where $e_u=0$ or $1$. Meanwhile, since there exist no positive
integer $k$ such that $P_{tu}=\sigma^{(k)}(P_{wv})$ when $t\neq w$,
we have that $h_1(x),h_2(x),\ldots, h_l(x)$ are pairwise relatively
prime. Note that $T(S)=T(S_1)+T(S_2)+\cdots+T(S_l)$. By Lemma
\ref{Lemma4}, we have the minimal polynomial of $T(S)$ is the
product of $h_1(x),h_2(x),\ldots h_l(x)$.
\end{proof}
Note that $h_j(x)$ could be obtained by Theorem \ref{Theorem2}. Now,
we are able to give our main algorithm in this paper:
\begin{Algorithm}
\label{Algorithm2} Let $S$ be a linear recurring sequence over
$GF(q^n)$ with minimal polynomial $h(x)$ which has no multiple roots.
Let $T(x)=c_0x+c_1x^q+\cdots+c_{n-1}x^{q^{n-1}}$ be a linear
transformation of $GF(q^n)$ over $GF(q)$. Assume that the canonical
factorization of $h(x)$ over $GF(q^n)$ is given by
\[h(x)=\prod_{j=1}^{m}P_j(x)
\]
Then, the procedure to find the minimal polynomial of $T(S)$ is as
follows:

Step 1. Classify $\{P_j\}$ according to the $\sigma-equivalence$
relation. Then, we get
\[
h(x)=\prod_{j=1}^{l}P_{j0}P_{j1}\cdots P_{ji_j}
\]
where $P_{ji}$ are distinct monic irreducible polynomials,
$P_{j0},P_{j1},\ldots, P_{ji_j}$ are in the same
$\sigma-equivalence$ class and $P_{tu}$, $P_{wv}$ are in different
$\sigma-equivalence$ classes when $t\neq w$.

Step 2. Use Algorithm \ref{Algorithm1} to get the decomposition of
$S$ such that $S=\sum_{j=1}^{l}S_j$ and the minimal polynomial of
$S_j$ is $P_{j0}P_{j1}\cdots P_{ji_j}$ for $1\leq j\leq l$.

Step 3. Use Theorem \ref{Theorem2} to calculate the minimal polynomial
$h_j(x)$ of $T(S_j)$ for $1\leq j\leq l$.

Step 4. By Theorem \ref{Theorem3}, the minimal polynomial of $T(S)$
is $\prod_{j=1}^{l}h_j(x)$.
\end{Algorithm}
At the end of this section, we give an example to show the procedure
of the algorithm.
\begin{Example}
\label{Example2} Let $GF(2)\subseteq GF(2^2)$ and $\alpha$ be a root
of $x^2+x+1$ in $GF(2^2)$. Let $S$ be a linear recurring sequence
with least period $15$ given by
\[
S=(1,\alpha,\alpha,0,\alpha,1,\alpha^2,\alpha^2,\alpha,\alpha+1,1,0,0,\alpha+1,0,1,\alpha,\alpha,0,\alpha,1\ldots).
\] The minimal polynomial of $S$ is $x^3+\alpha
x+\alpha^2$.  Let $T$ be a linear transformation of $GF(2^2)$ over
$GF(2)$ such that $T(x)=x+x^2$. Let us find the minimal polynomial
of
\[
T(S)=(0,1,1,0,1,0,1,1,1,1,0,0,0,1,0,0,1,1,0,1,0,\ldots).
\]
\end{Example}
The procedure is as follows:

 The canonical factorization of
$x^3+\alpha x+\alpha^2$ is
\begin{eqnarray}\label{pr2}
x^3+\alpha x+\alpha^2=(x+1)(x^2+x+\alpha^2).
\end{eqnarray}

Step 1. The canonical factorization (\ref{pr2}) has already been of
the desired form.

Step 2. Using the Division Algorithm for polynomials, we could find
$u_1(x)=\alpha x$ and $u_2(x)=\alpha$ such that $(\alpha
x)(x+1)+\alpha(x^2+x+\alpha^2)=1$. Recall that the mapping $L$ is defined in
the paragraph before Algorithm \ref{Algorithm1}. By Algorithm \ref{Algorithm1},
\[S_1=\alpha(L^2+L+\alpha^2)(S)=(1,1,1,1,1,1,1,1,1,1,1,1,1,1,1,1,1,1,1,1,1,1,\ldots)\] and the minimal polynomial of $S_1$ is $x+1$. Similarly,
\[S_2=(\alpha
L)(L+1)(S)=(0,\alpha^2,\alpha^2,1,\alpha^2,0,\alpha,\alpha,1+\alpha,\alpha,0,1,1,\alpha,1,0,\alpha^2,\alpha^2,\ldots)
\]
and the minimal polynomial of $S_2$ is $(x^2+x+\alpha^2)$.

Step 3. Note that
\[S_1(x)=\frac{1}{x+1},\quad S_2(x)=\frac{\alpha^2}{x^2+
x+\alpha^2}\] and $k(x+1)=1,k(x^2+x+\alpha^2)=2$. Then, by Theorem
\ref{Theorem2}, the minimal polynomial of $T(S_1)$ is given by
\[
h_1(x)=(1+x)^{\mu(T_{1,0}(1))}=1
\] and the minimal polynomial of $T(S_2)$ is given by
\[
h_2(x)=(x^2+x+\alpha^2)^{\mu(T_{2,0}(\alpha^2))}\sigma(x^2+x+\alpha^2)^{\mu(T_{2,1}(\alpha^2))}=(x^2+x+\alpha^2)(x^2+x+\alpha).
\]

Step 4. The minimal polynomial of $T(S)$ is the product of
$h_1(x),h_2(x)$, i.e., $(x^2+x+\alpha^2)(x^2+x+\alpha)$.

\section{On m-sequence}
\label{oms} In this section, we consider the special case in which
the minimal polynomial of $S$ is assumed to be primitive. The
following lemma is needed in this section:
\begin{Lemma}
\label{Lemma7}{\rm \cite[Lemma 3]{gx}}\quad Let $f(x)$ be a
primitive polynomial over $GF(q^n)$ with degree $m$. Let $\alpha$ be
a root of $f(x)$ in the splitting field $GF(q^{mn})$. Then, the
minimal polynomial $g(x)$ of $\alpha$ over $GF(q)$ is given by
\[g(x)=f(x)\sigma(f(x))\sigma^{(2)}(f(x))\cdots\sigma^{(n-1)}(f(x))\]
where $\sigma^{(i)}(f(x))$ is the minimal polynomial of
$\alpha^{q^i}$ over $GF(q^n)$ for $1\leq i\leq n-1$.
\end{Lemma}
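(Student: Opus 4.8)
The plan is to prove that the product $P(x):=f(x)\,\sigma(f(x))\cdots\sigma^{(n-1)}(f(x))$ equals the minimal polynomial $g(x)$ of $\alpha$ over $GF(q)$ by showing that $\sigma^{(0)}(f),\dots,\sigma^{(n-1)}(f)$ are exactly the distinct monic irreducible factors of $g$ over $GF(q^n)$, and then matching degrees. First I would collect the consequences of $f$ being primitive: $\alpha$ generates $GF(q^{mn})^{*}$, so $GF(q)(\alpha)=GF(q^{mn})$ and $\deg g=mn$; since $g$ is separable its roots are the $mn$ distinct $GF(q)$-conjugates $\alpha,\alpha^{q},\dots,\alpha^{q^{mn-1}}$; and because $GF(q)(\alpha^{q^{i}})=GF(q)(\alpha)=GF(q^{mn})$ for every $i$, also $GF(q^{n})(\alpha^{q^{i}})=GF(q^{mn})$, so the minimal polynomial of each $\alpha^{q^{i}}$ over $GF(q^{n})$ has degree $m$. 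Hence $g$ factors over $GF(q^{n})$ into exactly $mn/m=n$ distinct monic irreducible polynomials of degree $m$.

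The key step is the commutation identity: for $\psi(x)\in GF(q^{n})[x]$ and $\beta$ in an extension field, $\psi(\beta)^{q}=\sigma(\psi)(\beta^{q})$, since the $q$-th power map acts as $\sigma$ on the coefficients (which lie in $GF(q^{n})$) and as the Frobenius on $\beta$. Applying this $i$ times to $f(\alpha)=0$ gives $\sigma^{(i)}(f)(\alpha^{q^{i}})=0$. Now $\sigma^{(i)}(f)$ is monic of degree $m$ and, by Lemma \ref{Lemma2}, irreducible over $GF(q^{n})$; therefore it is the minimal polynomial of $\alpha^{q^{i}}$ over $GF(q^{n})$, which in particular proves the last assertion of the lemma. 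Since $\alpha^{q^{i}}$ is also a root of $g$ and $g\in GF(q^{n})[x]$, each $\sigma^{(i)}(f)$ is one of the irreducible factors of $g$ over $GF(q^{n})$, so $\sigma^{(i)}(f)\mid g$ in $GF(q^{n})[x]$.

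It remains to see that $\sigma^{(0)}(f),\sigma^{(1)}(f),\dots,\sigma^{(n-1)}(f)$ are pairwise distinct; this is the step that needs care. Suppose $\sigma^{(i)}(f)=\sigma^{(j)}(f)$ with $0\le i<j\le n-1$. Then $\alpha^{q^{i}}$ and $\alpha^{q^{j}}$ have the same minimal polynomial over $GF(q^{n})$, hence are $GF(q^{n})$-conjugate, so $\alpha^{q^{j}}=\alpha^{q^{i+tn}}$ for some $0\le t\le m-1$. Because $\alpha$ has multiplicative order $q^{mn}-1$, this forces $q^{j}\equiv q^{i+tn}\pmod{q^{mn}-1}$, and since both exponents lie in $[0,mn-1]$ and $\gcd(q,q^{mn}-1)=1$, this yields $j=i+tn$, which is impossible for $0\le i<j\le n-1$. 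Thus the $n$ polynomials $\sigma^{(i)}(f)$ are $n$ distinct monic irreducible divisors of $g$ over $GF(q^{n})$, and $g$ has exactly $n$ such; by unique factorization $P(x)$ divides $g$, and comparing degrees ($\deg P=mn=\deg g$, both monic) gives $P=g$, as claimed. The main obstacle is precisely this distinctness argument — equivalently, checking that the $\sigma$-orbit of $f$ has full length $n$ — while the rest is routine bookkeeping with conjugates, degrees, and unique factorization.
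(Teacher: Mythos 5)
Your proof is correct and complete. The paper does not actually reprove this lemma (it is quoted from the cited reference), but the facts the paper extracts from that reference's proof --- that the root sets $Root(\sigma^{(i)}(f(x)))$ for $0\leq i\leq n-1$ are pairwise disjoint and their union is $Root(g(x))$ --- are exactly what your argument establishes (conjugation commutes with $\sigma$ on coefficients, each $\sigma^{(i)}(f)$ is the minimal polynomial of $\alpha^{q^i}$ over $GF(q^n)$, and the order of the primitive root forces the $n$ factors to be distinct), so your route coincides with the intended one.
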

Denote $Root(p(x))$ the set of the roots of $p(x)$. By the proof of
\cite[Lemma 3]{gx}, for $0\leq i\neq j\leq n-1$,
$Root(\sigma^{(i)}(f(x)))$ and $Root(\sigma^{(j)}(f(x)))$ have no
intersection and
$\cup_{i=0}^{n-1}Root(\sigma^{(i)}(f(x)))=Root(g(x))$. Therefore, we
have the following lemma:
\begin{Lemma}
\label{Lemma8} Let $f(x)$ be a primitive polynomial over $GF(q^n)$.
Then, $k(f)=n$ and
\[g(x)=f(x)\sigma(f(x))\sigma^{(2)}(f(x))\cdots\sigma^{(n-1)}(f(x))\]
is primitive over $GF(q)$.
\end{Lemma}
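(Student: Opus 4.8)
The plan is to derive both assertions directly from Lemma \ref{Lemma7} and the remark recorded immediately after it. Write $m=\deg(f)$ and let $\alpha$ be a root of $f(x)$ in the splitting field $GF(q^{mn})$. Since $f$ is primitive over $GF(q^n)$, by definition $\alpha$ is a primitive element of $GF(q^{mn})$, i.e.\ $\alpha$ generates the multiplicative group $GF(q^{mn})^{*}$; this observation is the only external input beyond the two quoted facts.

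For the claim $k(f)=n$: we already know $k(f)\mid n$ (noted after Lemma \ref{Lemma6}), so it suffices to show that $f(x),\sigma(f(x)),\ldots,\sigma^{(n-1)}(f(x))$ are pairwise distinct. This is immediate from the remark after Lemma \ref{Lemma7}: for $0\le i\ne j\le n-1$ the sets $Root(\sigma^{(i)}(f(x)))$ and $Root(\sigma^{(j)}(f(x)))$ are disjoint, and each is nonempty because $\deg f=m\ge 1$; hence $\sigma^{(i)}(f)\ne\sigma^{(j)}(f)$. If we had $k(f)=d$ with $0<d<n$, then $\sigma^{(d)}(f)=f=\sigma^{(0)}(f)$ would contradict this distinctness. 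Therefore $k(f)\ge n$, and together with $k(f)\mid n$ this forces $k(f)=n$.

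For the primitivity of $g(x)$: by Lemma \ref{Lemma7}, $g(x)$ is exactly the minimal polynomial of $\alpha$ over $GF(q)$, so $g(x)$ is monic and irreducible over $GF(q)$. Its degree is $\sum_{i=0}^{n-1}\deg(\sigma^{(i)}(f))=mn$, since each automorphism $\sigma$ sends the nonzero leading coefficient of $f$ to a nonzero element and hence preserves degree; this matches $[GF(q^{mn}):GF(q)]$, consistent with $\alpha$ generating $GF(q^{mn})$ over $GF(q)$. Thus $g(x)$ is an irreducible polynomial over $GF(q)$ of degree $mn$ whose root $\alpha$ is a primitive element of $GF(q^{mn})$, which is precisely the definition of a primitive polynomial over $GF(q)$.

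I do not expect a genuine obstacle here: all the substance is already contained in Lemma \ref{Lemma7} and the disjointness observation following it, and Lemma \ref{Lemma8} merely repackages that content. The only point requiring care is keeping the two notions of ``primitive polynomial'' apart — primitivity of $f$ over $GF(q^n)$ versus primitivity of $g$ over $GF(q)$ — and invoking the correct ambient field $GF(q^{mn})$ at each stage.
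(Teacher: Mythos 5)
Your proof is correct and follows essentially the same route as the paper, which derives Lemma \ref{Lemma8} directly from Lemma \ref{Lemma7} and the root-disjointness remark preceding it (the paper gives no further argument beyond ``Therefore, we have the following lemma''). You have merely made explicit the two small steps the paper leaves implicit: disjoint nonempty root sets force the $n$ conjugate polynomials to be distinct, hence $k(f)=n$ given $k(f)\mid n$; and $g$ is primitive over $GF(q)$ because it is the degree-$mn$ minimal polynomial over $GF(q)$ of a primitive element of $GF(q^{mn})$.
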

\begin{Theorem}
\label{Theorem4} Let $S$ be an m-sequence over $GF(q^n)$ with
primitive minimal polynomial $h(x)$. Let $T$ be a linear
transformation of $GF(q^n)$ over $GF(q)$ with
\[
T(x)=c_0x+c_1x^q+\cdots+c_{n-1}x^{q^{n-1}}\mbox{,~~~}c_i\in GF(q^n).
\] Then, the minimal polynomial of $T(S)$ is given by
\[
h(x)^{e_0}(\sigma(h(x)))^{e_1}\cdots(\sigma^{(n-1)}(h(x)))^{e_{n-1}}
\]where $e_j=0$ if $c_j=0$ and $e_j=1$ if $c_j\neq 0$.
\end{Theorem}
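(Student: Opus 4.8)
The plan is to derive Theorem \ref{Theorem4} as a direct corollary of Theorem \ref{Theorem1} together with the structural facts about primitive polynomials recorded in Lemma \ref{Lemma8}. First I would note that since $h(x)$ is primitive over $GF(q^n)$, it is in particular irreducible, so Theorem \ref{Theorem1} applies directly to $S$ and $T$. Write $S(x)=g(x)/h(x)$ with $g(x)=b_0+b_1x+\cdots+b_lx^l$, $l<\deg(h)$, and $g\neq 0$ (the case $S=0$ being trivial). By Lemma \ref{Lemma8} we have $k(h)=n$, so the exponent $t$ with $tk(h)=n$ equals $1$. Consequently, in the definition (\ref{T_{k,j}}) of $T_{k(h),j}(x)=T_{n,j}(x)$, each sum collapses to the single term $c_jx^{q^j}$ for $0\le j<n$.

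The heart of the argument is then the evaluation of the exponents $e_j=\mu(T_{n,j}(b_0),T_{n,j}(b_1),\ldots,T_{n,j}(b_l))$ supplied by Theorem \ref{Theorem1}. Since $T_{n,j}(x)=c_jx^{q^j}$, we have $T_{n,j}(b_i)=c_jb_i^{q^j}$. If $c_j=0$ then every coordinate $c_jb_i^{q^j}$ is zero, so $\mu(\cdot)=0$ and $e_j=0$. If $c_j\neq 0$, then because $g\neq 0$ there is some $b_i\neq 0$, and since $GF(q^n)$ is a field the product $c_jb_i^{q^j}$ is nonzero; hence the tuple is not the zero tuple and $e_j=1$. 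This is precisely the claimed formula $e_j=0$ if $c_j=0$ and $e_j=1$ if $c_j\neq 0$, and substituting into the expression $h(x)^{e_0}(\sigma(h(x)))^{e_1}\cdots(\sigma^{(k(h)-1)}(h(x)))^{e_{k(h)-1}}$ from Theorem \ref{Theorem1}, together with $k(h)=n$, yields the stated minimal polynomial of $T(S)$.

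I do not expect a serious obstacle here; the proof is essentially bookkeeping once Lemma \ref{Lemma8} gives $k(h)=n$ (which forces $t=1$ and the degeneracy of the $T_{k,j}$ polynomials to single monomials). The one point requiring a word of care is handling the degenerate case where $S$ is the zero sequence — then $g(x)=0$, all $e_j=0$, and the minimal polynomial of $T(S)$ is $1$, consistent with the convention; I would dispose of this at the outset. A second minor point worth a sentence is justifying that $h$ being primitive over $GF(q^n)$ makes it legitimate to invoke Theorem \ref{Theorem1}, i.e. recalling that primitive polynomials are irreducible. Beyond that, the proof is a one-paragraph specialization.

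\begin{proof}
If $S$ is the zero sequence the claim is trivial, so assume $S\neq 0$. Since $h(x)$ is primitive over $GF(q^n)$, it is irreducible over $GF(q^n)$, and by Lemma \ref{Lemma8} we have $k(h)=n$. Thus the positive integer $t$ with $tk(h)=n$ is $t=1$, and the polynomial defined in (\ref{T_{k,j}}) reduces to $T_{k(h),j}(x)=T_{n,j}(x)=c_jx^{q^j}$ for $0\le j<n$. Write $S(x)=g(x)/h(x)$ with $g(x)=b_0+b_1x+\cdots+b_lx^l\neq 0$ and $l<\deg(h)$, as guaranteed by Lemma \ref{Lemma1}. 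By Theorem \ref{Theorem1}, the minimal polynomial of $T(S)$ is
\[
h(x)^{e_0}(\sigma(h(x)))^{e_1}\cdots(\sigma^{(n-1)}(h(x)))^{e_{n-1}},
\]
where $e_j=\mu(T_{n,j}(b_0),T_{n,j}(b_1),\ldots,T_{n,j}(b_l))=\mu(c_jb_0^{q^j},c_jb_1^{q^j},\ldots,c_jb_l^{q^j})$. If $c_j=0$, then $c_jb_i^{q^j}=0$ for all $i$, so $e_j=0$. If $c_j\neq 0$, then since $g\neq 0$ there is some $b_i\neq 0$, and as $GF(q^n)$ is a field, $c_jb_i^{q^j}\neq 0$; hence the tuple is nonzero and $e_j=1$. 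Therefore $e_j=0$ when $c_j=0$ and $e_j=1$ when $c_j\neq 0$, which is the asserted formula.
\end{proof}
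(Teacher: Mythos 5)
Your proof is correct and follows essentially the same route as the paper's: invoke Lemma \ref{Lemma8} to get $k(h)=n$, observe that $T_{n,j}(x)=c_jx^{q^j}$, and apply Theorem \ref{Theorem1} with $e_j=\mu(c_jb_0^{q^j},\ldots,c_jb_l^{q^j})=\mu(c_j)$ since $g\neq 0$. The only cosmetic difference is that you explicitly dispose of the zero-sequence case, which for an m-sequence cannot occur anyway.
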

\begin{proof} Assume that $S(x)=g(x)/h(x)$ where
$g(x)=b_0+b_1x+\cdots+b_lx^l\neq 0$ and $l<\mbox{deg}(h(x))$. Then,
we have $(b_0,b_1,\ldots,b_l)\neq (0,0,\ldots,0)$. It is known from Lemma
\ref{Lemma8} that $k(h)=n$. Then, by $(\ref{T})$ and $(\ref{T_{k,j}})$, we have
$T_{n,j}(x)=c_j x^{q^j}$ for $0\leq j<n$. By
Theorem \ref{Theorem1}, the minimal polynomial of $T(S)$ is:
\[h(x)^{e_0}(\sigma(h(x)))^{e_1}\cdots(\sigma^{(n-1)}(h(x)))^{e_{n-1}}\]
where
$e_j=\mu(T_{n,j}(b_0),T_{n,j}(b_1),\ldots,T_{n,j}(b_l))=\mu(c_jb_0^{q^j},c_jb_1^{q^j},\ldots,c_jb_l^{q^j})$.
Note that $(b_0^{q^j},b_1^{q^j},\ldots,b_l^{q^j})\neq (0,0,\ldots,0)$ since $(b_0,b_1,\ldots,b_l)\neq (0,0,\ldots,0)$. Thus,
by the definition of $\mu$, we have
$e_j=\mu(c_jb_0^{q^j},c_jb_1^{q^j},\ldots,c_jb_l^{q^j})=\mu(c_j)$.
Therefore, the minimal polynomial of $T(S)$ is:
\[
h(x)^{e_0}(\sigma(h(x)))^{e_1}\cdots(\sigma^{(n-1)}(h(x)))^{e_{n-1}}
\]where $e_j=0$ if $c_j=0$ and $e_j=1$ if $c_j\neq 0$. This completes the proof.
\end{proof}
\begin{Remark}
\label{Remark2} It is known from Theorem \ref{Theorem4} that the main result
\cite[Theorem 1]{yg} is correct when the linear recurring sequence
$S$ over $GF(q^n)$ is an m-sequence.
\end{Remark}
\begin{Corollary}
\label{Corollary3} Suppose that $S,h(x),T(x)$ are defined as the
same as in Theorem \ref{Theorem4}. Assume that $c_j\neq 0$ for all
$0\leq j<n$. Then, the minimal polynomial of $T(S)$ is
\[
h(x)\sigma(h(x))\cdots\sigma^{(n-1)}(h(x))\] and the linear
complexity of such $T(S)$ is the maximum among all possible linear
transformations $T$ of $GF(q^n)$ over $GF(q)$.
\end{Corollary}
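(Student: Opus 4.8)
The plan is to recognize that this statement is an immediate specialization of Theorem \ref{Theorem4} together with Corollary \ref{Corollary1}. First I would invoke Theorem \ref{Theorem4}: since $S$ has primitive minimal polynomial $h(x)$ and $T(x)=c_0x+c_1x^q+\cdots+c_{n-1}x^{q^{n-1}}$ with every $c_j\neq 0$, each exponent $e_j$ in the formula of Theorem \ref{Theorem4} equals $1$, so the minimal polynomial of $T(S)$ is exactly $h(x)\sigma(h(x))\cdots\sigma^{(n-1)}(h(x))$. By Lemma \ref{Lemma8} the $n$ polynomials $h,\sigma(h),\ldots,\sigma^{(n-1)}(h)$ are distinct (as $k(h)=n$) monic irreducibles of the same degree, so this product has degree $n\cdot\mbox{deg}(h)$.

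Next I would establish the maximality claim. For an arbitrary linear transformation $T'$ of $GF(q^n)$ over $GF(q)$, Theorem \ref{Theorem4} (again using $k(h)=n$) gives that the minimal polynomial of $T'(S)$ is $h(x)^{e_0'}(\sigma(h(x)))^{e_1'}\cdots(\sigma^{(n-1)}(h(x)))^{e_{n-1}'}$ with each $e_j'\in\{0,1\}$; hence its degree is $\bigl(\sum_{j=0}^{n-1}e_j'\bigr)\mbox{deg}(h)\leq n\,\mbox{deg}(h)$. Therefore the linear complexity of $T(S)$, which equals $n\,\mbox{deg}(h)$, is at least as large as that of $T'(S)$ for every $T'$, proving maximality. (Equivalently one may just cite the last sentence of Corollary \ref{Corollary1} with $k(h)=n$.)

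There is essentially no obstacle here: the corollary is a direct reading-off of Theorem \ref{Theorem4} in the case where all coefficients $c_j$ are nonzero, the only small points to record being that $k(h)=n$ for a primitive $h$ (Lemma \ref{Lemma8}) and that the $\sigma$-conjugates of $h$ are pairwise distinct monic irreducibles, so the displayed product is genuinely of degree $n\,\mbox{deg}(h)$ and no two factors can coincide. The proof is therefore short and I would present it in two or three sentences.
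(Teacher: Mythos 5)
Your proof is correct and is exactly the argument the paper intends: the paper states Corollary \ref{Corollary3} without a separate proof, treating it as an immediate specialization of Theorem \ref{Theorem4} (all $e_j=1$ when all $c_j\neq 0$), with maximality following because any other $T'$ yields exponents in $\{0,1\}$ and hence degree at most $n\,\mathrm{deg}(h)$. Your added remarks that $k(h)=n$ forces the $n$ conjugates $\sigma^{(j)}(h)$ to be distinct are accurate and harmless.
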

At the end of this section, we consider a special linear
transformation of $GF(q^n)$ over $GF(q)$, trace function.
\begin{Corollary}
\label{Corollary4}Let $Tr$ be the trace function from $GF(q^n)$ to
$GF(q)$, i.e,
\[
Tr(x)=x+x^q+x^{q^2}+\cdots+x^{q^{n-1}}
\] and let $h(x)$ be a primitive polynomial over $GF(q^n)$. Then,
$Tr$ is a bijective mapping from the set of all m-sequences over
$GF(q^n)$ with primitive minimal polynomial $h(x)$ to the set of all
m-sequences over $GF(q)$ with primitive minimal polynomial
$g(x)=h(x)\sigma(h(x))\cdots\sigma^{(n-1)}(h(x))$ over $GF(q)$.
\end{Corollary}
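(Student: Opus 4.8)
The plan is to deduce Corollary \ref{Corollary4} directly from Theorem \ref{Theorem4} (or its specialization Corollary \ref{Corollary3}) together with Lemma \ref{Lemma8}. First I would observe that the trace function $Tr(x)=x+x^q+\cdots+x^{q^{n-1}}$ is exactly the linear transformation of the form (\ref{T}) with all coefficients $c_0=c_1=\cdots=c_{n-1}=1$, in particular all nonzero. Hence for any m-sequence $S$ over $GF(q^n)$ with primitive minimal polynomial $h(x)$, Corollary \ref{Corollary3} applies and tells us that $Tr(S)$ has minimal polynomial $g(x)=h(x)\sigma(h(x))\cdots\sigma^{(n-1)}(h(x))$. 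By Lemma \ref{Lemma8}, $g(x)$ is primitive over $GF(q)$, so $Tr(S)$ is an m-sequence over $GF(q)$ with primitive minimal polynomial $g(x)$. This shows $Tr$ maps the stated source set into the stated target set; it remains to prove the map is a bijection.

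Next I would address surjectivity and injectivity. For counting purposes, recall that the m-sequences over $GF(q^n)$ with a fixed primitive minimal polynomial $h(x)$ of degree $m$ are, up to the shift, in bijection with the nonzero numerators $g(x)$ with $\deg g<m$ and $\gcd(g,h)=1$ via Lemma \ref{Lemma1}; since $h$ is irreducible the coprimality is automatic for $g\neq 0$, so there are $q^{nm}-1$ such sequences. Likewise, $\deg g(x)=nm$, so there are $q^{nm}-1$ m-sequences over $GF(q)$ with primitive minimal polynomial $g(x)$. Thus both sets have the same (finite) cardinality, and it suffices to establish injectivity of $Tr$ on the source set. Suppose $Tr(S)=Tr(S')$ for two such sequences, i.e. $Tr(S-S')=0$ as a sequence over $GF(q)$. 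Writing $S(x)=g(x)/h(x)$, $S'(x)=g'(x)/h(x)$, linearity gives $Tr\bigl((g-g')/h\bigr)=0$; if $g\neq g'$ then $(g-g')/h$ is the generating function of a (nonzero) m-sequence with primitive minimal polynomial $h(x)$, and applying Corollary \ref{Corollary3} again (all $c_j=1$) forces its image under $Tr$ to have minimal polynomial $g(x)=h(x)\cdots\sigma^{(n-1)}(h(x))\neq 1$, hence be nonzero — a contradiction. Therefore $g=g'$, i.e. $S=S'$, and $Tr$ is injective, hence bijective.

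The one genuinely delicate point is the bookkeeping around "the set of all m-sequences with minimal polynomial $h(x)$": one must be careful whether shifts of a sequence are counted separately (they are, since they have distinct generating functions $g(x)/h(x)$), and one should confirm that every nonzero $g$ with $\deg g<\deg h$ does arise from an honest m-sequence — this is exactly the content of Lemma \ref{Lemma1} in the irreducible case. A mild subtlety is that $Tr$ need not be invertible as a map on $GF(q^n)$ in the naive sense, but that is irrelevant: the claim is that the induced map on these particular sets of sequences is a bijection, and that follows purely from the minimal-polynomial computation plus the cardinality match. With injectivity and the equal cardinalities in hand, surjectivity is free, so I expect the main obstacle to be nothing deep but rather stating the counting argument cleanly; the substantive mathematics is entirely carried by Theorem \ref{Theorem4}/Corollary \ref{Corollary3} and Lemma \ref{Lemma8}.
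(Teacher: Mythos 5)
Your proposal is correct and follows essentially the same route as the paper: use Lemma \ref{Lemma8} and Corollary \ref{Corollary3} to show $Tr$ maps the source set into the target set, observe that the difference of two distinct m-sequences with minimal polynomial $h(x)$ is again such an m-sequence so its image under $Tr$ is nonzero (injectivity), and conclude surjectivity by matching the cardinalities $q^{nm}-1$ of the two sets. Your version just makes the counting and the irreducibility-implies-coprimality step via Lemma \ref{Lemma1} slightly more explicit than the paper does.
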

\begin{proof} Let $S$ be an m-sequence over $GF(q^n)$ with minimal
polynomial $h(x)$. By Lemma \ref{Lemma8} and Corollary
\ref{Corollary3} , we have $Tr(S)$ is an m-sequence over $GF(q)$
with primitive minimal polynomial
$g(x)=h(x)\sigma(h(x))\cdots\sigma^{(n-1)}(h(x))$ over $GF(q)$. For
any two different m-sequences $S_1,S_2$ over $GF(q^n)$ with minimal
polynomial $h(x)$, we have $S_1-S_2$ is also an m-sequence over
$GF(q^n)$ with minimal polynomial $h(x)$. Thus, $Tr(S_1-S_2)$ is an
m-sequence over $GF(q)$ with minimal polynomial $g(x)$. Then,
$Tr(S_1-S_2)\neq 0$. So, $Tr(S_1)\neq Tr(S_2)$ which implies that
$Tr$ is injective. Since
$h(x),\sigma(h(x)),\ldots,\sigma^{(n-1)}(h(x))$ have the same
degree, then the number of m-sequences over $GF(q^n)$ with minimal
polynomial $h(x)$ is equal to that of m-sequences over $GF(q)$ with
minimal polynomial $g(x)$. Therefore, $Tr$ is bijective. This
completes the proof.
\end{proof}
\section{Upper bound on linear complexity of $T(S)$}
\label{ublc} In this section, we give the definition of linear
complexity over $GF(q)$ of linear recurring sequence $S$ over
$GF(q^n)$ and then show it is an upper bound on the linear
complexity of $T(S)$ when $T$ exhausts all possible linear
transformations of $GF(q^n)$ over $GF(q)$. The notion of linear
complexity over $GF(q)$ of linear recurring sequences over $GF(q^n)$
was introduced by Ding, Xiao and Shan in \cite{dxs}, and discussed
by some authors, for example, see \cite{gf}, \cite{cm}-\cite{nv}.

Let $S=(s_1,s_2,\ldots,s_m,\ldots)$ be a linear recurring sequence
over $GF(q^n)$. The polynomial $f(x)=a_0+a_1x+\cdots+a_mx^{m}$ over
$GF(q^n)$ is called a characteristic polynomial over $GF(q^n)$
of $S$ if
\[
a_0s_i+a_1s_{1+i}+\cdots+a_ms_{m+i}=0 \quad \mbox{for }i\geq 1.
\]
If the characteristic polynomial $f(x)$ is a polynomial over $GF(q)$,
that is, all $a_i \in GF(q)$, we call $f(x)$ a characteristic
polynomial over $GF(q)$ of $S$. The monic characteristic polynomial
over $GF(q^n)$ (resp. $GF(q)$) of $S$ with least degree is called
the minimal polynomial over $GF(q^n)$ (resp. $GF(q)$) of $S$. The
degree of the minimal polynomial over $GF(q^n)$ (resp. $GF(q)$) of
$S$ is called the linear complexity over $GF(q^n)$ (resp. $GF(q)$)
of $S$. We will use the following lemma in this section.
\begin{Lemma}
\label{Lemma9}{\rm \cite[Theorem 5]{gf}}\quad Let $S$ be a linear
recurring sequence over $GF(q^n)$ with minimal polynomial $h(x)\in
GF(q^n)[x]$. Assume that the canonical factorization of $h(x)$ in
$GF(q^n)[x]$ is given by
\[
h(x)=\prod_{j=1}^{l}P_{j0}^{e_{j0}}P_{j1}^{e_{j1}}\cdots
P_{ji_j}^{e_{ji_j}}
\]
where $\{P_{uv}\}$ are distinct monic irreducible polynomials in
$GF(q^n)[x]$, $P_{j0},P_{j1},\ldots, P_{ji_j}$ are in the same
$\sigma$-equivalence class and $P_{uv}$, $P_{tw}$ are in the
different $\sigma$-equivalence classes when $u\neq t$. Then the
minimal polynomial over $GF(q)$ of $S$ is given by
\[
H(x)=\prod_{j=1}^{l}R(P_{j0})^{e_j}
\]
where $e_j=\max\{e_{j0},e_{j1},\ldots,e_{ji_j}\}$ and
$R(P_{j0})=P_{j0}\sigma(P_{j0})\cdots
\sigma^{(k(P_{j0})-1)}(P_{j0})$ for $1\leq j\leq l$.
\end{Lemma}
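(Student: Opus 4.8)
The plan is to first recast the minimal polynomial over $GF(q)$ as the generator of an ideal, and then compute that generator from the canonical factorization by an lcm argument. Recall the standard fact that the characteristic polynomials of $S$ over $GF(q^n)$ are precisely the polynomial multiples of the minimal polynomial $h(x)$ (the operators $f(L)$ satisfy $(fg)(L)=f(L)g(L)$, so $\{f\in GF(q^n)[x] : f(L)(S)=0\}$ is the ideal generated by $h$; this is also immediate from Lemma \ref{Lemma1} applied over $GF(q^n)$). Hence $f(x)\in GF(q)[x]$ is a characteristic polynomial over $GF(q)$ of $S$ if and only if $h(x)\mid f(x)$ in $GF(q^n)[x]$; that is, the characteristic polynomials over $GF(q)$ of $S$ are exactly the elements of the ideal $I=(h)\cap GF(q)[x]$ of $GF(q)[x]$. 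This ideal is nonzero, since $\prod_{i=0}^{n-1}\sigma^{(i)}(h)$ is fixed by $\sigma$ and hence lies in $GF(q)[x]$ while being divisible by $h$; so $I$ has a unique monic generator $H(x)$, which is by definition the minimal polynomial over $GF(q)$ of $S$.

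Next I would identify $H(x)=\mathrm{lcm}\bigl(h(x),\sigma(h(x)),\ldots,\sigma^{(n-1)}(h(x))\bigr)$. On one hand, every $f\in GF(q)[x]$ has $\sigma$-fixed coefficients, so $h\mid f$ forces $\sigma^{(i)}(h)\mid\sigma^{(i)}(f)=f$ for every $i$, whence $\mathrm{lcm}_i\sigma^{(i)}(h)\mid f$; taking $f=H$ gives $\mathrm{lcm}_i\sigma^{(i)}(h)\mid H$. On the other hand, set $L_0(x):=\mathrm{lcm}_i\sigma^{(i)}(h)$; since $\sigma$ merely permutes the finite set $\{\sigma^{(i)}(h):i\geq0\}$, we get $\sigma(L_0)=L_0$, so $L_0\in GF(q)[x]$, and $h\mid L_0$, so $L_0\in I$ and hence $H\mid L_0$. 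Therefore $H=L_0$.

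Finally I would evaluate $\mathrm{lcm}_i\sigma^{(i)}(h)$ using the factorization $h=\prod_{j=1}^{l}\prod_{v=0}^{i_j}P_{jv}^{e_{jv}}$. Distinct monic irreducibles are coprime, and by hypothesis no $\sigma$-power of a $P_{tu}$ equals a $P_{wv}$ with $t\neq w$; hence irreducible factors coming from different indices $j$ never coincide in any $\sigma^{(i)}(h)$, and the lcm factors as the product over $j$ of the lcms of the blocks $\prod_{v=0}^{i_j}P_{jv}^{e_{jv}}$. Fix $j$ and write $k_j=k(P_{j0})$, so the $\sigma$-orbit of $P_{j0}$ is $\{\sigma^{(t)}(P_{j0}):0\leq t<k_j\}$ and each $P_{jv}=\sigma^{(a_{jv})}(P_{j0})$ for a residue $a_{jv}$ modulo $k_j$. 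Applying $\sigma^{(s)}$ sends the term $P_{jv}^{e_{jv}}$ to $\sigma^{(s+a_{jv})}(P_{j0})^{e_{jv}}$, and as $s$ runs over the integers the index $s+a_{jv}$ runs over every residue modulo $k_j$; hence every orbit member $\sigma^{(t)}(P_{j0})$ occurs in some $\sigma^{(s)}(h)$, and the largest exponent with which it occurs over all $s$ equals $e_j=\max\{e_{j0},\ldots,e_{ji_j}\}$. So the $j$-th block contributes $\bigl(\prod_{t=0}^{k_j-1}\sigma^{(t)}(P_{j0})\bigr)^{e_j}=R(P_{j0})^{e_j}$, and multiplying over $j$ yields $H(x)=\prod_{j=1}^{l}R(P_{j0})^{e_j}$, as claimed.

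The ideal-theoretic facts in the first paragraph are routine; the step needing the most care is the last one — verifying that sweeping a single block $\prod_v P_{jv}^{e_{jv}}$ (which may omit several members of the orbit of $P_{j0}$ and carries unequal exponents) through all powers of $\sigma$ really produces every orbit member at the uniform maximal exponent $e_j$, so that the full factor $R(P_{j0})^{e_j}$ appears in the lcm.
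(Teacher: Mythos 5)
Your proof is correct. Note that the paper itself offers no proof of this lemma: it is imported verbatim as \cite[Theorem 5]{gf}, so there is no internal argument to compare against. Your route --- identifying the set of characteristic polynomials over $GF(q)$ with the ideal $(h)\cap GF(q)[x]$, showing its monic generator is $\mathrm{lcm}\bigl(h,\sigma(h),\ldots,\sigma^{(n-1)}(h)\bigr)$, and then computing that lcm block-by-block over the $\sigma$-equivalence classes --- is the standard one and is complete. The two points that genuinely need care are both handled: the nonvanishing of the ideal (via the $\sigma$-invariance of $\prod_i\sigma^{(i)}(h)$), and the exponent bookkeeping in the last step, where you correctly observe that for each orbit member $\sigma^{(t)}(P_{j0})$ the shifts $s$ realize every exponent $e_{jv}$, so the lcm carries the uniform exponent $e_j=\max_v e_{jv}$; the hypothesis that distinct $j$-blocks lie in distinct $\sigma$-equivalence classes is exactly what lets the lcm factor across blocks. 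One could quibble that you should say explicitly that $\sigma$ commutes with taking monic lcm (because it is a ring automorphism preserving monicity), but that is routine.
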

Denote $L_{q^n}(S)$ the linear complexity over $GF(q^n)$ of $S$ and
$L_{q}(S)$ the linear complexity over $GF(q)$ of $S$. Note that
$L_{q^n}(S)$ is the linear complexity of $S$, which is discussed in
the previous sections. Then, we have the following theorem:
\begin{Theorem}
\label{Theorem6} Let $S$ be a linear recurring sequence over
$GF(q^n)$ with minimal polynomial $h(x)$ over $GF(q^n)$ which has no
multiple roots. Then, for any linear transformation $T$ of $GF(q^n)$
over $GF(q)$, we have
\[
L_{q^n}(T(S))\leq L_{q}(S).
\]
\end{Theorem}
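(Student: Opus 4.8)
The plan is to obtain the bound as a pure degree count, chaining together the structural results already proved. First I would invoke Lemma \ref{Lemma9}. Since $h(x)$ has no multiple roots, its canonical factorization in $GF(q^n)[x]$ is
\[
h(x)=\prod_{j=1}^{l}P_{j0}P_{j1}\cdots P_{ji_j}
\]
with all exponents equal to $1$, where the $P_{uv}$ are distinct monic irreducibles, $P_{j0},\ldots,P_{ji_j}$ lie in one $\sigma$-equivalence class, and $P_{uv},P_{tw}$ lie in different classes when $u\neq t$. Applying Lemma \ref{Lemma9} with every $e_{jk}=1$, the minimal polynomial over $GF(q)$ of $S$ is $H(x)=\prod_{j=1}^{l}R(P_{j0})$ with $R(P_{j0})=P_{j0}\sigma(P_{j0})\cdots\sigma^{(k(P_{j0})-1)}(P_{j0})$. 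Since $\sigma$ is a field automorphism, each factor $\sigma^{(i)}(P_{j0})$ has degree $\deg(P_{j0})$, so $\deg R(P_{j0})=k(P_{j0})\deg(P_{j0})$ and therefore $L_q(S)=\sum_{j=1}^{l}k(P_{j0})\deg(P_{j0})$.

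Next I would decompose $S=S_1+\cdots+S_l$ via Lemma \ref{Lemma5}, where $S_j$ has minimal polynomial $P_{j0}P_{j1}\cdots P_{ji_j}$. Because $h$ is squarefree, each of these minimal polynomials is a product of distinct irreducibles lying in a single $\sigma$-equivalence class, so Theorem \ref{Theorem2} applies to $S_j$ and shows that the minimal polynomial $h_j(x)$ of $T(S_j)$ has the form
\[
P_{j0}(x)^{e_0}(\sigma(P_{j0}(x)))^{e_1}\cdots(\sigma^{(k(P_{j0})-1)}(P_{j0}(x)))^{e_{k(P_{j0})-1}}
\]
with each exponent $e_u\in\{0,1\}$; in particular $\deg h_j\le k(P_{j0})\deg(P_{j0})$. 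By Theorem \ref{Theorem3} the minimal polynomial of $T(S)$ is the product $h_1(x)h_2(x)\cdots h_l(x)$, whence
\[
L_{q^n}(T(S))=\sum_{j=1}^{l}\deg h_j\le\sum_{j=1}^{l}k(P_{j0})\deg(P_{j0})=L_q(S),
\]
which is the assertion.

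I do not expect a genuine obstacle here; the two points that merely need a remark are that Theorem \ref{Theorem2} is legitimately applicable to every $S_j$ --- which is exactly where the no-multiple-roots hypothesis on $h$ is used, since it forces each $S_j$'s minimal polynomial into the required squarefree, single-$\sigma$-class shape --- and that Theorem \ref{Theorem3} is entitled to write $\deg(h_1\cdots h_l)=\sum_j\deg h_j$, which rests on the $h_j$ being pairwise relatively prime, a consequence of $P_{tu}$ and $P_{wv}$ lying in distinct $\sigma$-equivalence classes for $t\neq w$. Once these are noted the proof is the single displayed chain of (in)equalities above.
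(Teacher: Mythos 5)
Your proposal is correct and follows essentially the same route as the paper: decompose $S$ by $\sigma$-equivalence classes, use Theorem \ref{Theorem2} to get $h_j\mid R(P_{j0})$, Theorem \ref{Theorem3} to multiply the $h_j$, and Lemma \ref{Lemma9} to identify $L_q(S)$ with $\sum_j k(P_{j0})\deg(P_{j0})$. The only cosmetic difference is that you compare degrees term by term while the paper phrases the conclusion as a divisibility of the two minimal polynomials; both yield the same inequality.
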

\begin{proof}
Recall that $\{P_{ji}\}$, $\{S_j\}$ and $\{h_j\}$ are defined in
Theorem \ref{Theorem3}. Note that the minimal polynomial over
$GF(q^n)$ of $T(S)$ is the product of $\{h_j\}$. By Theorem
\ref{Theorem2}, we have $h_j|R(P_{j0})$ where $R(P_{j0})$ is defined
in Lemma \ref{Lemma9}. Then, by Theorem \ref{Theorem3} and Lemma
\ref{Lemma9}, the minimal polynomial over $GF(q)$ of $S$ is a
multiple of the minimal polynomial over $GF(q^n)$ of $T(S)$.
Therefore, $L_{q^n}(T(S))\leq L_{q}(S).$
\end{proof}
\begin{Remark}
\label{Remark3} Theorem \ref{Theorem6} gives an upper bound on the
linear complexity of $T(S)$. By Corollary \ref{Corollary1}, we know
that this bound is tight if the minimal polynomial $h(x)$ over
$GF(q^n)$ of $S$ is irreducible over $GF(q^n)$.
\end{Remark}

\baselineskip=14pt\small

\end{document}